\documentclass[aps,pra,11pt,amsmath,amssymb,tightenlines,superscriptaddress,notitlepage,nofootinbib,onecolumn]{revtex4-2}
\pdfoutput=1

\usepackage[T1]{fontenc}
\usepackage[utf8]{inputenc}
\usepackage{amsthm,amsmath,amssymb,mathtools}
\usepackage{graphicx,booktabs,array}
\usepackage[dvipsnames,svgnames]{xcolor}

\let\mathscr\relax
\usepackage{newpxtext,newpxmath}
\usepackage[cal=cm]{mathalfa}
\usepackage{microtype}
\usepackage{placeins}
\usepackage{hyperref}
\hypersetup{colorlinks=true,linkcolor=blue!45!black,citecolor=blue!45!black,urlcolor=blue!45!black,
 pdftitle={Making the most of leftovers: Improved privacy amplification for quantum key distribution},
 pdfauthor={Matthew Simon Tan; Bartosz Regula; Marco Tomamichel}}
\newtheorem{theorem}{Theorem}
\newtheorem{proposition}[theorem]{Proposition}
\newtheorem{lemma}[theorem]{Lemma}
\theoremstyle{definition}
\newtheorem*{remark}{Remark}
\DeclareMathOperator{\Tr}{Tr}
\DeclareMathOperator{\Ree}{Re}

\newcommand{\id}{\mathbb{I}}
\newcommand{\allones}{\mathbb{J}}
\newcommand{\norm}[1]{\left\lVert #1\right\rVert}
\newcommand{\hmin}{H_{\min}}
\newcommand{\hmax}{H_{\max}}
\newcommand{\eps}{\varepsilon}
\newcommand{\pe}{\mathrm{pe}}

\newcommand{\pa}{\mathrm{pa}}
\newcommand{\pass}{\mathrm{pass}}
\newcommand{\Qtol}{Q_{\mathrm{tol}}}
\newcommand{\leakEC}{\mathrm{leak}_{\mathrm{ec}}}

\allowdisplaybreaks
\makeatletter
\def\fnum@figure{\textbf{Figure}\nobreakspace\textbf{\thefigure}}
\def\fnum@table{\textbf{Table}\nobreakspace\textbf{\thetable}}
\renewcommand{\@caption@fignum@sep}{: }
\renewcommand{\p@subsection}{\thesection.}
\renewcommand{\p@subsubsection}{\thesection.\thesubsection.}
\makeatother
\renewcommand{\thetable}{\arabic{table}}

\let\bar\relax
\newcommand{\lset}{\left\{\vphantom{\big|}}
\newcommand{\bar}{\;:\;}
\newcommand{\rset}{\vphantom{\big|}\right\}}

\let\texteq\relax
\newcommand{\texteq}[1]{\stackrel{\mathclap{\mbox{\scriptsize #1}}}{=}}

\let\oldproofname\proofname
\renewcommand{\proofname}{\rm\bf{\oldproofname}}

\makeatletter
\renewenvironment{proof}[1][\proofname]{\par
\pushQED{\qed}%
\normalfont \topsep6\p@\@plus6\p@\relax
\trivlist
\item\relax
{\bfseries  
#1\@addpunct{.}}\hspace\labelsep\ignorespaces 
}{%
\popQED\endtrivlist\@endpefalse
}
\makeatother

\begin{document}

\title{Making the most of leftovers:\texorpdfstring{\\}{ }Improved privacy amplification for quantum key distribution}

\author{Matthew Simon Tan}
\affiliation{Department of Physics, National University of Singapore, Singapore}

\author{Bartosz Regula}
\affiliation{Mathematical Quantum Information RIKEN Hakubi Research Team, RIKEN Pioneering Research Institute (PRI) and RIKEN Center for Quantum Computing (RQC), Japan}

\author{Marco Tomamichel}
\affiliation{Department of Electrical and Computer Engineering, National University of Singapore, Singapore}
\affiliation{Centre for Quantum Technologies, National University of Singapore, Singapore}

\begin{abstract}
The amount of secret key that can be obtained from a quantum key distribution run depends on both the physically observed error rates and the mathematical bounds used to certify security. For finite datasets, conservative bounds force users to discard a substantial fraction of the potentially available key. Here we further refine and extend the privacy amplification bounds achievable through the recent leftover hash lemma of Regula and Tomamichel~[\href{https://arxiv.org/abs/2603.04493}{arXiv:2603.04493}] and incorporate them into the security analysis of quantum key distribution based on entropic uncertainty relations. This improves on state-of-the-art key rates in finite-block regimes, certifying more secret key from the same experimental data without changes to the protocol, and outperforming techniques based on entropy accumulation. 
The results illustrate how sharper mathematical estimates can directly increase the usable output of a quantum communication system.
\end{abstract}

\maketitle

\section{Introduction}

The final output of a quantum key distribution experiment~\cite{bb84,ekert1991} is a string of secret bits. Its length is determined not only by how many signals reach the receiver, or by how often their measurement outcomes disagree, but also by what can be proved about the information available to an adversary. Under the assumptions of a security proof, the observed data constrain that information and determine how much of the shared raw string can safely be retained. This makes the quality of a mathematical bound an operational resource: a sharper bound can turn the same experimental data into a longer certified key, at the same security level. This connection between a proof and a usable output is particularly direct in quantum cryptography, where security must be certified by information-theoretic arguments that hold against an adversary with quantum side information and must remain meaningful when the key is subsequently used~\cite{portmann2022}.

Progress in quantum key distribution spans both security theory and experimental implementation~\cite{pirandola2020,xu2020}. 
The finite size of the data set is a central consideration here: asymptotic key rates describe arbitrarily long experiments, but any implementation can only process a finite block of signals. Statistical fluctuations, information revealed during error correction, and the conversion of partially secret data into an almost uniform key all incur finite-size costs, the accurate estimation of which then becomes vital.

Short blocks are especially important when data collection is constrained by the communication opportunity. As a representative example, satellite experiments have established the feasibility of distributing quantum keys over long free-space links~\cite{liao2017,yin2020}, but for low-Earth-orbit links, finite visibility windows and optical losses restrict the number of detections available during an individual pass. Pooling observations over longer periods can increase the block size, but it changes the latency and operational requirements. Studies of satellite key generation have therefore identified finite-key analysis as a significant determinant of performance~\cite{lim2021,sidhu2022}.

Improving the statistical analysis already has demonstrable consequences in this setting. Lim et al.~\cite{lim2021} tightened the sampling analysis for small blocks and examined its implications for space-based communication. More recently, Mannalath, Zapatero and Curty~\cite{mannalath2025} developed sharper finite-statistics tools and showed that exact evaluation of hypergeometric probabilities can substantially improve upon general-purpose tail bounds. These results establish that a portion of the finite-key penalty reflects the conservatism of a proof rather than an unavoidable limitation of the experiment. They also motivate examining other steps in the security analysis where an avoidable relaxation may discard usable key.

Privacy amplification is a particularly important step of establishing the security of a QKD protocol, but at the same time one whose mathematical analysis left room for improvement. In this step, Alice and Bob apply a publicly chosen hash function to their reconciled strings to obtain a shorter key that is nearly uniform and independent of the adversary. A central result here is the leftover hash lemma (LHL)~\cite{bennett1995,tssr2011}, which allows for an estimation of the output length using the entropy of the input. In the quantitative analysis of privacy amplification, an important role is played by the smooth min-entropy~\cite{rennerwolf2005,rennerkoenig2005,renner2005}, and Renner's thesis~\cite{renner2005} already described this characterisation as essentially optimal: universal hashing provides an achievable key length via the LHL, while converse bounds limit the amount of secure randomness that any family of hash functions can extract, and both of these can be expressed in terms of the smooth min-entropy. However, the achievable and converse bounds involve security-dependent corrections and different relations between the smoothing and security parameters, and even minuscule gaps in such parameters can have major consequences at finite block lengths and stringent security levels. 

The room for improvement in the early bounds became particularly apparent through a different approach to privacy amplification. Dupuis~\cite{dupuis2023} derived a bound directly in terms of sandwiched R\'enyi entropies, avoiding an intermediate conversion to smooth min-entropy. This allows a security analysis to retain its R\'enyi entropy description through to the final key-length bound. Recent developments in entropy accumulation~\cite{dupuis2020,metger2023,arqand2025} exploited this route and obtained substantially improved finite-key performance. In particular, the R\'enyi entropy accumulation theorem (REAT) of Arqand, Hahn and Tan~\cite{arqand2025}, together with subsequent applications~\cite{kamin2025eat,kamin2025renyi}, demonstrates the practical advantage of avoiding losses in the conventional smooth min-entropy conversion. More recently, Goh and Tan~\cite{gohtan2026} implemented these key rate computations at short block lengths, using improved parameter choices. These gains suggested that the original bound in terms of smooth min-entropy was not as tight as it could have been.

Indeed, recently some of us~\cite{rt2026} showed that much tighter privacy amplification bounds can be obtained by revisiting how smoothing is used in the quantum setting. The work introduced measured smooth entropies and established a strengthened LHL, which improved on previous smooth min-entropy estimates and recovered the tight R\'enyi entropy bounds of Dupuis~\cite{dupuis2023}. This raises a natural question for finite-key quantum cryptography: how much of this improvement can be retained when the available entropy guarantee is expressed in terms of the conventional smooth min-entropy that features in standard security proofs?
This is particularly important because bounds in terms of the smooth min-entropy can be plugged directly into many existing proof approaches, without the need to invoke R\'enyi entropy accumulation techniques.  
Here, as part of our technical contribution, we further sharpen the connection between measured smooth entropies of~\cite{rt2026} and the familiar smooth min-entropy, deriving an improved bound on the amount of key that can be certified from such a guarantee in terms of min-entropy. We furthermore extend the scope of applicability of the underlying techniques by generalising them to all 2-universal hash functions.

To quantify the resulting gain in quantum key distribution, we build on the finite-key security analyses developed using smooth entropies, entropic uncertainty relations (EURs) and complementary techniques~\cite{renner2005,tomamichel2011eur,tomamichel2012,hayashi2012}. These methods provide explicit security guarantees against general attacks for experimentally relevant block sizes. Building on the earlier finite-key analysis of Ref.~\cite{tomamichel2012}, we incorporate our improved privacy amplification bound into the framework of Tomamichel and Leverrier~\cite{tl2017}, which uses an EUR~\cite{tomamichel2011eur} to obtain a smooth min-entropy guarantee and provides a complete analysis for a family of entanglement-based and prepare-and-measure protocols, including variants of BB84~\cite{bb84}. Its explicit treatment of the security parameters, sampling rule, error correction leakage and final key length makes it particularly suitable for quantifying the effect of a sharper privacy amplification bound. We combine the new hashing bound with additional improvements that can be obtained from the exact evaluation of the relevant sampling failure probability. Importantly, the quantum operations and acceptance rule remain unchanged --- the improvement lies in the length of key certified from the same data. 

We compare the resulting key lengths with the refined REAT rates of Goh and Tan~\cite{gohtan2026}, which represent the state-of-the-art finite-key bounds obtained from entropy accumulation approaches. This comparison tests the practical value of the improved smooth-entropy analysis against a modern alternative, beyond an improvement over the older uncertainty relation bound alone. 
Our results show an advantage of our approach in the regimes studied, particularly at short block lengths. The improvement thus demonstrates how revisiting a mathematical step that was understood to be essentially optimal can nevertheless yield more usable key from a finite quantum communication experiment.

\section{Main results}
\label{sec:main}

\subsection{Refined leftover hashing for privacy amplification}

Our main technical result is a sharper guarantee for the amount of secret key that can be extracted from a given smooth min-entropy estimate. It improves the final conversion from an entropy guarantee to a key length, so that the same experimental data can support a longer key at the same security level. We first state this guarantee and explain the source of the gain, then show its effect in the finite-block comparisons of Figure~\ref{fig:rates} and Table~\ref{tab:rates}.

Let $X$ be a classical raw string and $E$ the adversary's quantum side information, jointly described by a possibly subnormalised state $\rho_{XE}$. The quantity $\hmin^\eps(X|E)_\rho$ is the conditional smooth min-entropy: it measures the randomness available against $E$ after allowing a controlled approximation of size $\eps$, measured in terms of purified distance (see Appendix~\ref{app:conventions} for details). Let $\mathcal H=\{h_s:\mathcal X\to\mathcal Z\}_{s\in\mathcal S}$ be the hash family. An independent uniform public seed $S$ is used to select $h=h_S$, giving output $Z=h(X)$. We write
\begin{equation}
 \Delta(\rho,\mathcal H)
 \coloneqq\mathbb E_S\left[\frac12\norm{\omega_{ZE}^{S}-\pi_Z\otimes\rho_E}_1\right],
 \label{eq:secrecy}
\end{equation}
where $\omega_{ZE}^{s}$ is the state after hashing with seed value $s$ and $\pi_Z$ is uniform on $\mathcal Z$. Thus $\Delta$ measures how far the output is from a uniform key independent of the adversary, including knowledge of the public seed. 

The tightest formulation of our result applies to $2^*$-universal families of hash functions, where any two distinct inputs collide with probability of exactly $1/|\mathcal Z|$. This is the same as the assumption made in many related works~\cite{tl2017,dupuis2023,rt2026}, and the commonly used hashing method based on uniformly sampling binary Toeplitz matrices forms one such family. However, a slightly relaxed formulation of our result in fact applies more broadly: in Appendix~\ref{app:proofs} we extend Theorem~\ref{thm:pa} to more general, 2-universal hash functions, where only an upper bound on the collision probability is known.

\begin{theorem}[Privacy amplification]
\label{thm:pa}
Let $\rho_{XE}$ be a finite-dimensional subnormalised classical--quantum state and let $0\leq\eps<\sqrt{\Tr\rho_{XE}}$. For any $2^*$-universal family of hash functions,
\begin{equation}
 \Delta(\rho,\mathcal H)
 \leq \eps^2+\frac12\sqrt{\big(|\mathcal Z|-1\big)\,\kappa(\eps)\,
             2^{-\hmin^\eps(X|E)_\rho}}
 \label{eq:newlhl}
\end{equation}
where one may take
\begin{equation}
 \kappa(\eps)=
 \begin{cases}
  (1-\eps^2)(1+3\eps^2),&0\leq\eps\leq1/\sqrt3,\\
  4/3,&1/\sqrt3<\eps<1.
 \end{cases}
 \label{eq:kappa}
\end{equation}
In particular, $1\leq\kappa(\eps)\leq4/3$ and $\kappa(\eps)=1+2\eps^2+O(\eps^4)$ as $\eps\to0$.
\end{theorem}

Compared with the smooth min-entropy bound in~\cite[Corollary~15]{rt2026}, Theorem~\ref{thm:pa} retains the quadratic smoothing cost $\eps^2$ while improving the dependence on $|\mathcal Z|\,2^{-\hmin^\eps(X|E)_\rho}$ from a cube root to a square root.

The conventional bound in terms of purified distance smooth min-entropy, due to Tomamichel, Schaffner, Smith and Renner~\cite{tssr2011}, is
\begin{equation}
 \Delta(\rho,\mathcal H)
 \leq 2\eps+\frac12\sqrt{|\mathcal Z|\,2^{-\hmin^\eps(X|E)_\rho}}.
 \label{eq:oldlhl}
\end{equation}
The essential gain is the replacement $2\eps\mapsto\eps^2$. At a fixed secrecy target, this permits a much larger smoothing radius and hence avoids an overly pessimistic entropy estimate. The competing penalty term $\kappa(\eps)\geq1$ must be traded against the gain from the smaller smoothing error --- its multiplicative contribution to the entropy term corresponds to the loss of $\log_2\kappa(\eps)$ bits. However, this penalty is at most $\log_2(4/3)\simeq0.415$ bits and is essentially zero in the small-$\eps$ regime in which we primarily apply the result.

The practical significance becomes clear when the entropy estimate comes from a sampling test. In the security analysis of key distribution protocols, an excluded event of probability $p$ leads to a purified distance of $\sqrt p$. The conventional hashing bound would then incur a cost of $2 \sqrt p$, whereas our achievability result only incurs $p$. At a fixed security target, this allows a larger statistical failure budget and hence an improved estimate of the unobserved errors. 

Another way to understand the advantage is in terms of the scaling of the extractable randomness.  
For a binary key of length $\ell$, the output alphabet has size $|\mathcal Z|=2^\ell$, and we can solve Theorem~\ref{thm:pa} for $\ell$. Take then a state $\rho_{XE}$ and let $\ell_{\mathrm{ext}}^{\eps_{\mathrm{sec}}}(X|E)_\rho$ denote the largest integer key length obtainable with secrecy error at most $\eps_{\mathrm{sec}}$, allowing arbitrary hash families with an independent public seed. The achievability result of Theorem~\ref{thm:pa} gives
\begin{equation}
 \ell_{\mathrm{ext}}^{\eps_{\mathrm{sec}}}(X|E)_\rho
 \geq
 \left\lfloor
 \hmin^{\sqrt{\eps_{\mathrm{sec}}-\delta}}(X|E)_\rho
 -\log_2\frac{1}{3\delta^2}
 \right\rfloor
 \label{eq:keylower}
\end{equation}
for any $0<\delta<\eps_{\mathrm{sec}}$. 
The older incarnations of the leftover hash lemma would have instead indicated a bound in terms of $\hmin^{\eps_{\mathrm{sec}}/2-\delta}(X|E)_\rho$, which has a markedly worse dependence on the target secrecy error $\eps_{\mathrm{sec}}$. 
Importantly, the scaling of~\eqref{eq:keylower} is indeed optimal: the converse bound of~\cite[Proposition~21]{rt2026} implies that
\begin{equation}
 \ell_{\mathrm{ext}}^{\eps_{\mathrm{sec}}}(X|E)_\rho
 \leq\hmin^{\sqrt{\eps_{\mathrm{sec}}}}(X|E)_\rho
       +\log_2\frac1{1-\eps_{\mathrm{sec}}},
 \label{eq:keyupper}
\end{equation}
and this applies to general hash functions, without even assuming that the hash family needs to be universal. Together, the achievable and converse bounds thus precisely identify the optimal dependence of the smooth min-entropy parameter on the target secrecy error as $\sqrt{\eps_{\mathrm{sec}}}$.
Here we note that, while the scaling itself already follows from the findings of~\cite{rt2026}, our Theorem~\ref{thm:pa} additionally improves the dependence on the additive fudge term $\delta$ in~\eqref{eq:keylower}, which is needed to obtain the stronger bounds on key rates in this work.


\subsection{Security analysis}

The security of QKD can be established through several closely related
approaches. These include the pioneering proofs of
Mayers~\cite{mayers2001}, the entanglement-distillation approach of
Shor and Preskill~\cite{shorpreskill2000}, the complementarity approach of
Koashi~\cite{koashi2009}, Renner's
reduction to collective attacks via smooth entropy~\cite{renner2005}, entropic uncertainty
relations~\cite{tomamichel2011eur}, and entropy
accumulation~\cite{dupuis2020,metger2023}. 
Here we adopt the entropic uncertainty relation framework of Tomamichel and Leverrier~\cite{tl2017}, which provides a particularly
elementary finite-key security proof for
BB84-type protocols. Our main comparison point are entropy accumulation methods, since they currently provide the most competitive key rates~\cite{arqand2025,gohtan2026}.

We retain the protocol structure and
uncertainty relation argument of~\cite{tl2017}, while sharpening two ingredients:
we replace the conventional smooth min-entropy leftover hashing lemma
with Theorem~\ref{thm:pa}, and the analytical sampling bound used in
parameter estimation with an exact hypergeometric probability.
For the numerical completeness analysis, we also evaluate the honest
test-abort probability using the exact binomial tail. The accounting
for information disclosed during reconciliation and verification
otherwise follows the original framework, and we refer to~\cite{tl2017} for details.

We quickly introduce the relevant concepts and variables for the security analysis here. Of $m$ rounds, a uniformly selected sample of $k$ is used to estimate the error rate, leaving $n=m-k$ key rounds. The protocol continues only if the test error rate is below $\Qtol$. Alice then sends Bob an error correction message of at most $\leakEC$ bits and a verification tag of length $t$ over the authenticated public channel. Bob reconciles his string and checks the tag. If verification succeeds, they apply privacy amplification to produce keys of length $\ell$.

\begin{figure}[!t]
 \centering
 \includegraphics{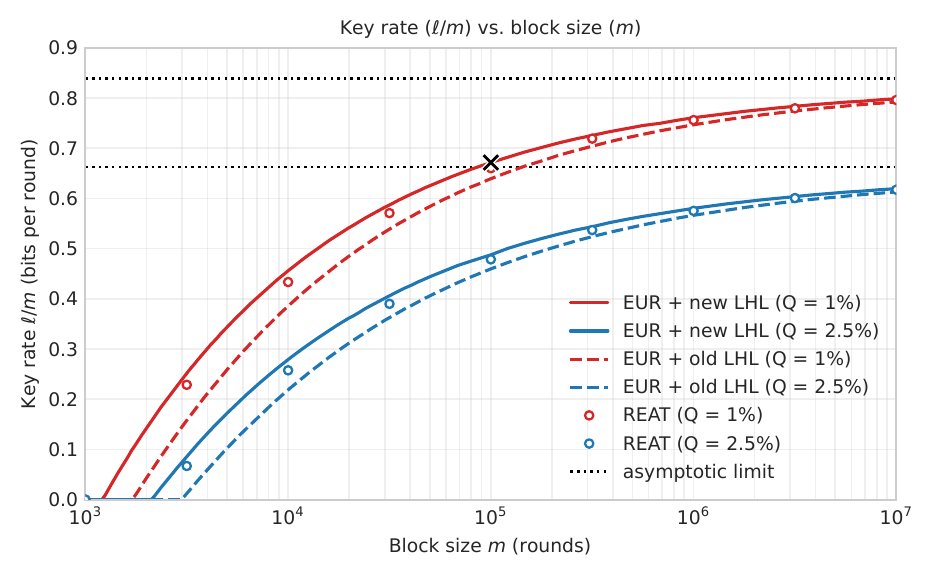}
 \caption{\textbf{Finite-block key rates.} Secret key bits per round, $\ell/m$, are plotted for quantum bit error rates $Q=1\%$ and $2.5\%$. The entropic uncertainty relation (EUR) curves compare the new and old leftover hash lemmas (LHL) with exact sampling. Their inputs are ideal BB84 overlap $c=1/2$, $\eps_{\mathrm{tot}}=10^{-10}$ and $\eps_{\mathrm{com}}=0.01$. Circles show the corresponding key rates from the refined analysis of the R\'enyi entropic accumulation theorem~\cite{arqand2025} due to Goh and Tan~\cite{gohtan2026}. For each EUR curve, the protocol parameters $k,n,t,\Qtol$ are optimised to maximise $\ell$. Table~\ref{tab:example-parameters} gives example parameter values for the crossed point. See Section~\ref{sec:methods} and Appendix~\ref{app:numerics} for details on the numerical analysis.}
 \label{fig:rates}
\end{figure}

The precise notion of security we employ here is in the composable correctness-and-secrecy sense of~\cite{portmann2022} and~\cite{tl2017}. The security bound has three contributions: verification, parameter estimation and privacy amplification. Verification bounds the probability of accepting unequal keys by $\eps_{\mathrm{cor}}(t)=2^{-t}$. For $0<\nu<1/2-\Qtol$, let $\eps_\pe(\nu)$ be the worst-case joint probability that the test passes and the error rate relevant to the remaining key rounds is at least $\Qtol+\nu$. Here $\nu$ is the extra error rate above the test threshold $\Qtol$. (More details can be found in Section~\ref{sec:methods} and Appendix~\ref{app:sampling}.) At fixed output length $\ell$, let $\eps_\pa(t,\nu)$ denote the privacy amplification contribution, given by the square-root term in equation~\eqref{eq:qkd}. The three contributions must satisfy $\eps_{\mathrm{cor}}(t)+\eps_\pe(\nu)+\eps_\pa(t,\nu)\leq\eps_{\mathrm{tot}}$; the last two terms together bound secrecy.

Let $\pass$ denote acceptance of both parameter estimation and
verification. Following the entropy argument leading to
Eq.~(102) of~\cite{tl2017},
but replacing its sampling estimate with the exact failure probability
$\eps_{\mathrm{pe}}(\nu)$, we obtain, whenever
$\Pr[\pass]>\eps_{\mathrm{pe}}(\nu)$,
\begin{equation}
 \hmin^{\sqrt{\eps_{\mathrm{pe}}(\nu)}}(X\mid E\wedge\pass)
 \geq n[1-h(\Qtol+\nu)]-\leakEC-t
 \eqqcolon H_*(\nu,t).
 \label{eq:hstar}
\end{equation}
Here $\wedge\pass$ denotes restriction to the subnormalised accepted
branch, and $E$ includes the adversary's quantum information and the
public transcript, excluding the independent privacy-amplification
seed. The uncertainty relation supplies the term
$n[1-h(\Qtol+\nu)]$, while reconciliation and verification reduce
the entropy by at most $\leakEC+t$ bits.
Combining this estimate with Theorem~\ref{thm:pa} gives the following
security criterion.

\begin{theorem}[Finite-key security]
\label{thm:qkd}
Fix the round counts $m,k,n$ with $m=k+n$ and the test threshold $\Qtol$. Under the protocol assumptions of~\cite{tl2017}, specialised to ideal BB84 measurements and with a $2^*$-universal family for privacy amplification, an $\eps_{\mathrm{tot}}$-secure final key of length $\ell$ can be extracted whenever
\begin{equation}
 \inf\left\{\eps_{\mathrm{cor}}(t)+\eps_\pe(\nu)+
 \sqrt{\frac{2^{\ell-H_*(\nu,t)}}{3}}\;:\;0<\nu<\frac12-\Qtol,\ t\in\mathbb N\right\}
 \leq\eps_{\mathrm{tot}},
 \label{eq:qkd}
\end{equation}
with $H_*(\nu,t)$ defined in~\eqref{eq:hstar}. 
\end{theorem}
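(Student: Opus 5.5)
Fix any admissible pair $(\nu,t)$ with $0<\nu<\frac12-\Qtol$ and $t\in\mathbb N$, and show that the protocol is $\eps$-secure with $\eps=\eps_{\mathrm{cor}}(t)+\eps_\pe(\nu)+\sqrt{2^{\ell-H_*(\nu,t)}/3}$. Since the admissible set is nonempty, the infimum in~\eqref{eq:qkd} being at most $\eps_{\mathrm{tot}}$ means that for any slack there is such a pair with $\eps$ arbitrarily close to that infimum. If the infimum is attained, we are done directly. Otherwise, the point to check is whether~\eqref{eq:qkd} should be read with a strict inequality, or whether security with $\eps_{\mathrm{tot}}+\eta$ for every $\eta>0$ suffices. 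Since the trace-distance criterion is closed under limits with the protocol fixed, taking $\eta\to 0$ recovers $\eps_{\mathrm{tot}}$-security.

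Following the composable framework of~\cite{portmann2022,tl2017}, split security into correctness and secrecy.

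\textbf{Correctness.} The probability that verification passes while $X\neq\hat X$ is at most $2^{-t}=\eps_{\mathrm{cor}}(t)$, exactly as in~\cite{tl2017}; this part is unchanged.

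\textbf{Secrecy.} We must bound $\Delta$ for the subnormalised accepted state $\rho_{XE\wedge\pass}$, where $E$ includes the public transcript.
\begin{itemize}
\item If $\Pr[\pass]\le\eps_\pe(\nu)$, the trace-distance secrecy term is trivially at most $\Pr[\pass]\le\eps_\pe(\nu)$.
\item Otherwise,~\eqref{eq:hstar} gives $\hmin^{\sqrt{\eps_\pe}}(X|E\wedge\pass)\ge H_*(\nu,t)$. The hypothesis $\eps<\sqrt{\Tr\rho}$ of Theorem~\ref{thm:pa} with $\eps=\sqrt{\eps_\pe}$ is exactly $\eps_\pe<\Pr[\pass]$. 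Applying Theorem~\ref{thm:pa} with $|\mathcal Z|=2^\ell$ gives
\[
\Delta\le \eps_\pe(\nu)+\tfrac12\sqrt{(2^\ell-1)\,\kappa(\sqrt{\eps_\pe})\,2^{-H_*}}.
\]
\end{itemize}

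It remains to bound the square-root term by $\sqrt{2^{\ell-H_*}/3}$, i.e.\ to show $\frac14(2^\ell-1)\kappa\le 2^\ell/3$. Using $\kappa\le 4/3$ from Theorem~\ref{thm:pa} gives $\frac14\kappa\le\frac13$, and $2^\ell-1\le 2^\ell$ finishes it. So in both cases the secrecy error is at most $\eps_\pe(\nu)+\sqrt{2^{\ell-H_*}/3}$.

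Adding correctness and secrecy then gives total error at most $\eps_{\mathrm{cor}}+\eps_\pe+\eps_\pa\le\eps_{\mathrm{tot}}$ for the chosen pair. This uses the standard fact that $\eps_{\mathrm{cor}}+\eps_{\mathrm{sec}}$ bounds the composable distance, as in~\cite{tl2017}.

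The main obstacle is justifying~\eqref{eq:hstar} itself with the exact $\eps_\pe$ replacing the analytic sampling bound. This requires re-tracing the argument of~\cite{tl2017}:
\begin{itemize}
\item Condition on the complement of the bad sampling event, which has probability at most $\eps_\pe$, so that the state on this event is within purified distance $\sqrt{\eps_\pe}$ of the accepted state.
\item Apply the EUR with $c=1/2$ to relate the min-entropy to $n-\hmax$ of the conjugate-basis error string, bounded by $n h(\Qtol+\nu)$.
\item Subtract $\leakEC+t$ via the chain rule for classical leakage.
\end{itemize}
Since the text states~\eqref{eq:hstar} as established, I would cite it and keep the proof to the case analysis and the constant bookkeeping above.
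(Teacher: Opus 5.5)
Your bookkeeping for a fixed pair $(\nu,t)$ matches the paper's proof. Correctness gives $2^{-t}$ via \cite[Theorem~2]{tl2017}. Secrecy splits into the trivial case $\Pr[\pass]\le\eps_\pe(\nu)$ and the case where Theorem~\ref{thm:pa} is applied with smoothing $\sqrt{\eps_\pe(\nu)}$, $|\mathcal Z|=2^\ell$, $\kappa\le4/3$ and $2^\ell-1\le2^\ell$.

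The gap is in passing from fixed pairs to the infimum in \eqref{eq:qkd}. Your limit argument explicitly needs ``the protocol fixed''. However, $t$ is a protocol parameter: it is the verification tag length, and it also enters $H_*$. Only $\nu$ is a pure analysis parameter. A sequence $(\nu_j,t_j)$ approaching the infimum may have varying $t_j$, so it may describe different protocols, and then no single protocol inherits the limiting bound. Nothing in your argument excludes this, and the conclusion requires choosing one $t$.

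The missing observation, which is how the paper closes the proof, is that the infimum over $t\in\mathbb N$ is attained. Since $h\ge0$, one has $H_*(\nu,t)\le n-\leakEC-t$. Hence the square-root term is at least $\sqrt{2^{\ell-n+\leakEC+t}/3}$, uniformly in $\nu$, and it diverges as $t\to\infty$. So only finitely many $t$ matter, and some $t^*$ realises the joint infimum. For that fixed protocol your secrecy bound holds for every $\nu$ simultaneously. The security error is therefore bounded by the infimum over $\nu$ directly, with no limiting or strict-versus-non-strict discussion needed.

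Second, you take \eqref{eq:hstar} as given, whereas most of the paper's proof is its derivation (Propositions~\ref{prop:sampling}--\ref{prop:hmin}). Your sketch asserts without justification the one step specific to this paper. The quantity $\eps_\pe(\nu)$ is a worst case over \emph{deterministic} patterns $z$. It bounds $\Pr[\pe\wedge\Omega]$ for an adversarially distributed $Z$ only because $Z$ is independent of $\Pi$ (bases are chosen independently of the sample), followed by averaging over $z$.

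Your sketch also has two smaller imprecisions:
\begin{itemize}
\item Simply restricting to the complement of $\Omega$ does not give purified distance $\sqrt{\eps_\pe}$. With $\Pr[\pe]\approx1$ and $\Pr[\pe\wedge\Omega]=\eps_\pe$ it gives $\sqrt{\eps_\pe(2-\eps_\pe)}$. You need the state supplied by \cite[Lemma~7]{tl2017}.
\item Moving from $\wedge\pe$ to $\wedge\pass$ uses \cite[Lemma~10]{tl2017}, which your outline omits.
\end{itemize}
Citing the main-text claim is defensible, but a complete proof must contain these steps.
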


For valid choices of $\nu,t$ with $\eps_{\mathrm{cor}}(t)+\eps_\pe(\nu)<\eps_{\mathrm{tot}}$, Theorem~\ref{thm:qkd} gives the certified integer key length
\begin{equation}
 \ell_{\mathrm{cert}}=
 \left\lfloor H_*(\nu,t)+2\log_2\!\bigl[\eps_{\mathrm{tot}}-\eps_{\mathrm{cor}}(t)-\eps_\pe(\nu)\bigr]+\log_2 3\right\rfloor.
 \label{eq:keylength}
\end{equation}
The sampling-failure probability $\eps_\pe(\nu)$ also depends on $k$, $n=m-k$, and $\Qtol$; we have suppressed these dependencies because these protocol parameters are fixed in Theorem~\ref{thm:qkd}.

The above notion of security only guarantees that the probability that the key is insecure and the protocol does not abort is small. Since a protocol that always aborts would meet this criterion, one considers a completeness parameter. This parameter $\eps_{\rm com}$ upper bounds the probability the protocol aborts when the eavesdropper is absent. This is referred to as the `honest implementation'. The threshold $\Qtol$ and the error correction code parameters are then chosen to satisfy this condition.

For the EUR curves, we numerically search over $k$ and $\nu$, with $n=m-k$, $\Qtol$ set by completeness, and $t$ optimised at each candidate. Changing $\nu$ only changes the analysis of a fixed protocol, while the selected protocol parameters may differ between curves. The details of the optimisation are described in Section~\ref{sec:methods} (Methods) and Appendix~\ref{app:numerics}.

Figure~\ref{fig:rates} shows that the improved LHL has its largest relative effect at short block lengths. In this regime, the uncertainty in the unobserved errors and the cost of turning the remaining entropy into a secure key consume a substantial fraction of the available randomness. The new bound reduces the secrecy cost associated with the sampling-failure probability, allowing a less pessimistic entropy estimate at the same security target. The gain therefore comes both from a tighter conversion of entropy into key and from the greater sampling failure probability that the security budget can accommodate. As the block size grows, these finite-size costs become smaller relative to the block, and the advantage diminishes.

The comparison in Fig.~\ref{fig:rates} and Table~\ref{tab:rates} shows that this hashing improvement remains valuable even after the sampling analysis has been sharpened. We use exact sampling for both the old and the new LHL EUR calculations --- their difference reflects the improved hashing bound and the resulting choice of protocol parameters. The two refinements thus address distinct sources of conservatism in the security proof. The comparison with the refined REAT results computed in Goh and Tan~\cite{gohtan2026} also shows that a smooth min-entropy analysis can remain competitive at short block lengths. The older hashing bound can obscure this potential by imposing an unnecessarily large cost when the statistical uncertainty is substantial.

The benefit requires no changes to the implementation or hashing construction: the protocol retains the same sequence of operations, while the sharper analysis permits more of the available randomness to be retained as key.

Short-block performance matters when the amount of data cannot readily be increased. Satellite links provide a motivating example: finite communication windows and optical losses limit the detections available within a pass~\cite{lim2021,sidhu2022}. Here $m$ counts the rounds retained for testing and key generation after detection and sifting, rather than transmitted pulses. A sharper finite-key bound can make such limited data more useful without requiring a longer acquisition period. Figure~\ref{fig:satellite} examines this regime within the stated BB84 and leakage models, showing both the absolute key output and $M_0$, the smallest block size certified to yield a secret bit on the specified parameter lattice. Since authentication costs are excluded, $M_0$ marks positive key output under the assumed authenticated channel and does not by itself establish net key expansion. Their dependence on the bit error rate and the security target makes clear why short-block gains matter: they affect how much data is needed before key extraction becomes possible, as well as how much key a finite data set can yield.

\begin{table}[!htbp]
 \centering
 \begin{tabular}{ccccc}
 \toprule
 $Q$ & $m$ & EUR + old LHL & EUR + new LHL & REAT\\
 \midrule
 $1\%$ & $10^4$ & $0.385$ & $0.456$ & $0.433$\\
         & $10^5$ & $0.639$ & $0.671$ & $0.660$\\
         & $10^6$ & $0.747$ & $0.761$ & $0.756$\\
 $2.5\%$ & $10^4$ & $0.219$ & $0.279$ & $0.258$\\
         & $10^5$ & $0.460$ & $0.488$ & $0.479$\\
         & $10^6$ & $0.566$ & $0.580$ & $0.575$\\
 \bottomrule
 \end{tabular}
 \caption{Selected rates $\ell/m$ from the same datasets as Fig.~\ref{fig:rates}, rounded to three decimal places. In order to ensure a fair comparison, both EUR columns include the improved tail analysis through exact hypergeometric sampling; they differ in the leftover hash lemma used, and separately optimise the protocol parameters using the same security and completeness targets and reconciliation model. The old LHL column therefore does not reproduce the original analysis of~\cite{tl2017} with its analytical tail bound. The REAT values are taken from the code  adapted from ~\cite{gohtan2026}, as discussed in Section \ref{sec:methods-numerics}.}
 \label{tab:rates}
\end{table}

\begin{figure}[!htbp]
 \centering
 \includegraphics{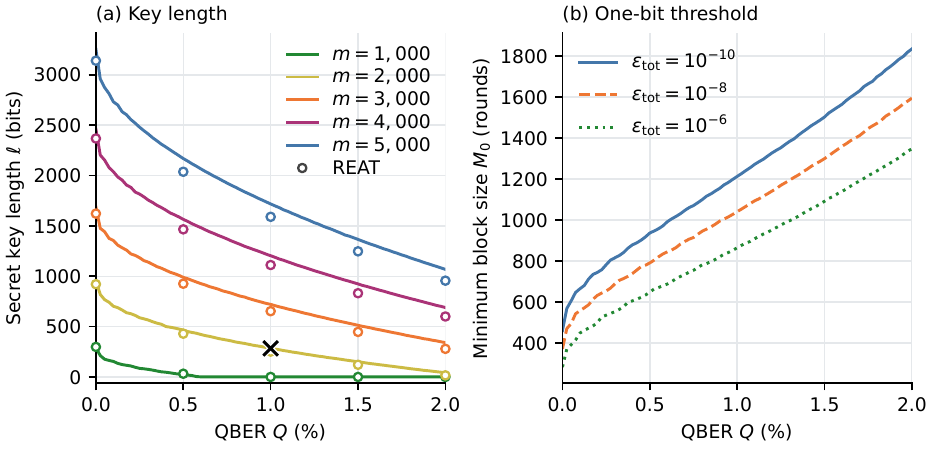}
\caption{\textbf{Key extraction from short blocks.}
(a) Secret key length for $m\in\{1000,2000,3000,4000,5000\}$
at $\eps_{\mathrm{tot}}=10^{-10}$.
Lines show EUR + new LHL with exact sampling; circles show the
REAT results~\cite{gohtan2026}, with negative rates displayed at zero.
(b) Minimum block size $M_0$ yielding one secret bit, for
$\eps_{\mathrm{tot}}\in\{10^{-10},10^{-8},10^{-6}\}$.
Both panels' EUR calculations use ideal BB84 overlap $c=1/2$,
$\eps_{\mathrm{com}}=0.01$ and assumed leakage coefficients
$(\xi_1,\xi_2)=(1.13,3.82)$ at every quantum bit error rate (QBER).
Panel (a) uses $\kappa=4/3$; (b) retains the full bound of
Theorem~\ref{thm:pa}.
We optimise $k,n,t,\Qtol$ to maximise $\ell$ at fixed $m$ in (a),
and minimise $m$ with $\ell\geq1$ in (b).
The cross identifies the example in Table~\ref{tab:example-parameters}.
See Section~\ref{sec:methods} and Appendix~\ref{app:numerics}.}
\label{fig:satellite}
\end{figure}
\FloatBarrier

\section{Methods}
\label{sec:methods}

We first explain how the improved entropy comparison strengthens leftover hashing, then describe the exact tails used for completeness and parameter estimation. Finally, we specify the physical model and optimisation choices behind the numerical results. The complete mathematical and computational details are provided in the Appendix.

\subsection{Leftover hashing and smooth min-entropy}\label{sec:methods-hashing}

Privacy amplification is naturally governed by quadratic quantities: averaging over a hash family brings in the probability that two inputs collide, which is why families of hash functions that control this collision probability are often most useful. Classically, this leads to a natural correspondence with the collision divergence $D_2(p\|q)=\log_2\sum_x p_x^2/q_x$ for a distribution $p$ and reference $q$. Smoothing permits a small amount of probability to be discarded before evaluating this quantity, with the discarded mass incurring a linear, rather than quadratic, cost. For quantum side information, \cite{rt2026} defined a measured smooth collision divergence $D_2^{\eta,\,\mathbb{M}}$: it is based on smoothing the outcome distribution of each measurement by at most $\eta$ in total variation (trace) distance, then taking the supremum over measurements. The optimisation over measurements here is a technical property and is not connected with the operational protocols that the quantity aims to describe. Indeed, the optimisation can be expressed without invoking measurements --- however, unlike conventional smoothing notions, this then leads to a smoothing over Hermitian operators rather than just quantum states.

For any independent uniform public seed family of 2*-universal hash functions $\mathcal H=\{h_s:\mathcal X\to\mathcal Z\}_{s\in\mathcal S}$, i.e.\ one such that the probability of input collisions $\Pr_S[h_S(x) = h_S(x')]$ uniformly equals $1/|\mathcal Z|$ for all $x\neq x'$, the leftover hashing bound of~\cite[Theorem~14]{rt2026} takes the form
\begin{equation}
 \Delta(\rho,\mathcal H)\leq\eta+
 \frac12\sqrt{(|\mathcal Z|-1)\,
 2^{D_2^{\eta,\mathbb{M}}(\rho_{XE}\|\id_X\otimes\sigma_E)}}
 \label{eq:measuredlhl}
\end{equation}
for any $0\leq\eta<\Tr\rho$ and any state $\sigma_E$. The first term is the linear smoothing cost, and the second is the quadratic collision cost that directly bounds the size of the output alphabet. The precise divergence and operator representations are given in Appendix~\ref{app:measured}.

To connect this result to existing QKD entropy estimates, we need a bound in terms of $\hmin^\eps$. We use purified distance smoothing~\cite{tomamichel2010duality}: for normalised states, $P(\rho,\rho')=\sqrt{1-F(\rho,\rho')^2}$, where $F(\rho,\rho')=\norm{\sqrt\rho\sqrt{\rho'}}_1$ is the root fidelity. 
In the security analysis framework that we employ here, it becomes necessary to accommodate accepted protocol branches that involve subnormalised states with $\Tr \rho < 1$; for this, the fidelity and purified distance are naturally extended to generalised variants~\cite{tomamichel2010duality,tomamichelbook}. 
The smooth min-entropy $\hmin^\eps(X|E)_\rho$ is defined as the supremum of $\hmin(X|E)_{\rho'}$ over all subnormalised states $\rho'_{XE}$ with $P(\rho,\rho') \leq \eps$. 
Our key contribution is then Proposition~\ref{prop:entropycomparison}, proved in Appendix~\ref{app:comparison}, which gives
\begin{equation}\begin{aligned}
 \inf_{\sigma_E\in\mathcal S(E)}
 D_2^{\eps^2,\,\mathbb{M}}(\rho_{XE}\|\id_X\otimes\sigma_E)
 &\leq-\hmin^\eps(X|E)_\rho+\log_2\kappa(\eps)\\
 &\leq -\hmin^\eps(X|E)_\rho+\log_2 \frac43.
 \label{eq:entropycomparison}
\end{aligned}\end{equation}
Inserting this comparison into equation~\eqref{eq:measuredlhl} with $\eta=\eps^2$ allows us to retain the square-root entropy term, giving exactly Theorem~\ref{thm:pa}. This improves the relaxation to smooth min-entropy in Ref.~\cite[Corollary~15]{rt2026}, which only gives a cube-root entropy term. 
Our proof approach exploits the variational form of the measured smooth collision divergence $D_2^{\eta,\,\mathbb{M}}$ to directly connect it with the purified distance smoothing needed for $H_{\min}^\eps$, instead of going through intermediate one-shot divergence inequalities as in~\cite{rt2026}.
The coefficient $\kappa(\eps)$ comes from the fidelity estimate in our proof and tends to one for small $\eps$. 

Another contribution we make is to extend the leftover hashing approach of~\cite{rt2026} beyond 2*-universal hash functions. Although some commonly employed hash functions are indeed 2*-universal, it is known that relaxing this assumption slightly can lead to reductions in the seed size needed to implement the hashing~\cite{stinson_1994,tssr2011}. Precisely, in Appendix~\ref{app:hash} we show that only assuming an upper bound on the collision probability, $\Pr_S[h_S(x)=h_S(x')]\leq\frac{1+\mu}{|\mathcal Z|}$ for $x \neq x'$, leads to an achievability result of the form
\begin{equation}
 \Delta(\rho,\mathcal H)\leq\eps^2+
 \frac12\sqrt{2 \big[|\mathcal Z|-1 + \mu(|\mathcal X|-1)\big]\,
 \kappa(\eps)\, 2^{-\hmin^\eps(X|E)_\rho}}.
\end{equation}
This shows in particular that using our approach with the ordinary notion of 2-universal hash functions, where $\mu = 0$, guarantees a security bound that is only a single bit worse than the tight 2*-universal statement in Theorem~\ref{thm:pa}.

\subsection{Exact tails for parameter estimation and completeness}\label{sec:methods-tails}

\begin{figure}[!htbp]
 \centering
 \includegraphics[width=0.75\linewidth]{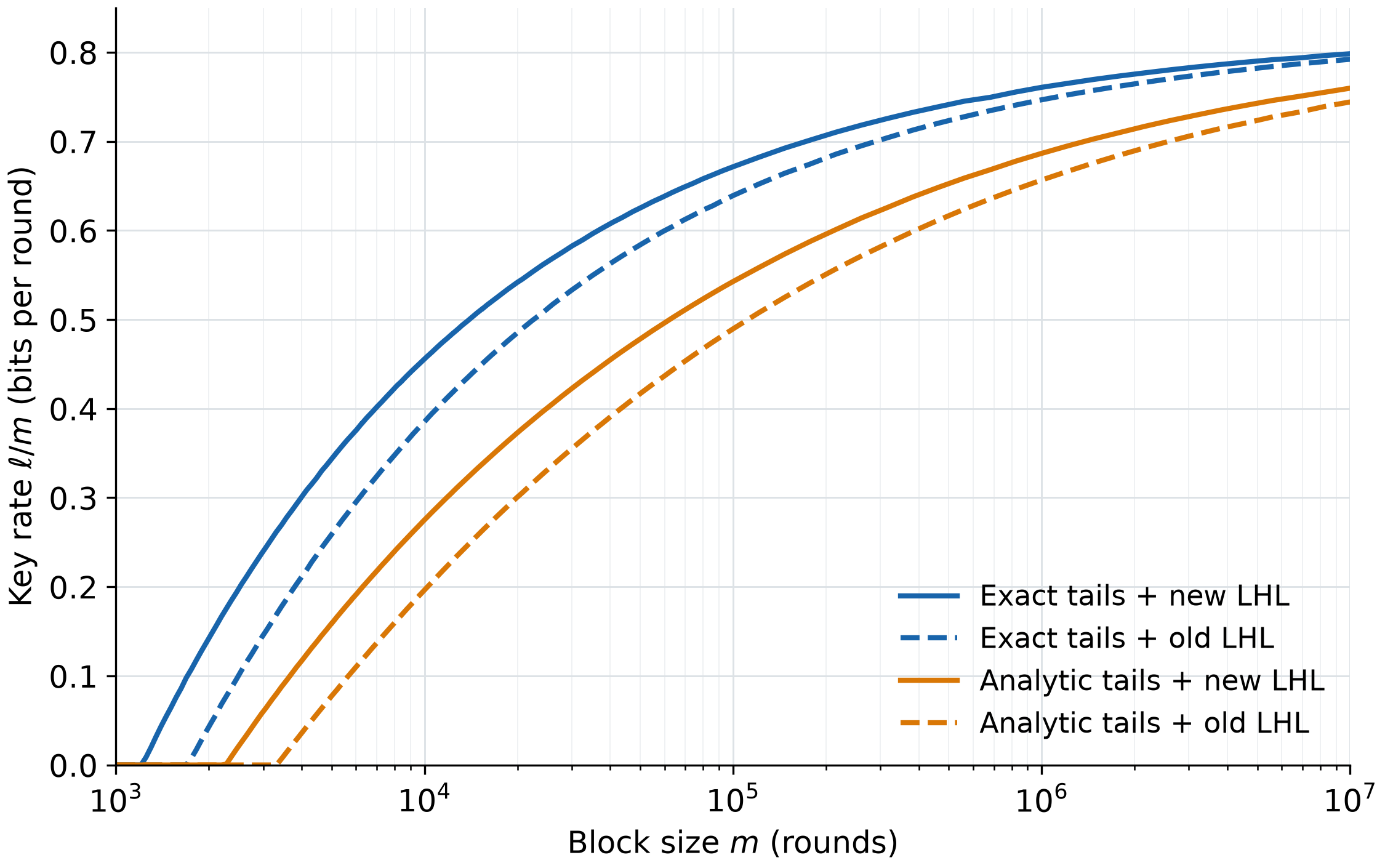}
 \caption{\textbf{Contributions of exact tails and improved hashing.} Secret key bits per round at $Q=1\%$, comparing exact and analytic tails with the new and old LHLs. Inputs are ideal BB84 overlap $c=1/2$, $\eps_{\mathrm{tot}}=10^{-10}$ and $\eps_{\mathrm{com}}=0.01$; the new LHL uses $\kappa=4/3$. For each choice of tail bounds and LHL, the protocol parameters $k,n,t,\Qtol$ are optimised to maximise $\ell$. Table~\ref{tab:example-parameters} gives example parameter values for the new LHL with exact sampling. See Section~\ref{sec:methods} and Appendix~\ref{app:numerics} for details on the numerical analysis.}
 \label{fig:sampling-comparison}
\end{figure}

During the parameter estimation stage of the protocol, Alice announces her test-round outcomes and Bob compares them with his own. If the fraction of errors is at least $\Qtol$, the protocol aborts. However, even in the acceptance branch, the fraction of errors in the rounds used for key generation may still exceed $\Qtol$ by some $\nu \in (0,\frac{1}{2}-Q_{\mathrm{tol}})$. We refer to the probability of this event as the sampling-failure probability. Let $Z=(Z_1,\ldots,Z_m)$ be binary random variables and let $\Pi$ be a uniformly random $k$-subset of $[m]$, independent of $Z$. The sampling-failure event then has probability
\begin{equation}
 \Pr\left[\frac1k\sum_{i\in\Pi}Z_i<\Qtol\ \text{and}\ \frac1n\sum_{i\notin\Pi}Z_i\geq\Qtol+\nu\right].
\end{equation}
In our security analysis, the sampling-failure probability $\eps_\pe(\nu)$ contributes to $\eps_{\mathrm{tot}}$, so a larger $\eps_\pe(\nu)$ results in a shorter secure key. Tighter estimates of this probability therefore let our proof certify a longer key from the same data.

Previous analyses, e.g.~\cite{tl2017}, use the Serfling bound
\begin{align}
    \Pr\left[\frac1k\sum_{i\in\Pi}Z_i<\Qtol\ \text{and}\ \frac1n\sum_{i\notin\Pi}Z_i\geq\Qtol+\nu\right] \leq \exp\!\left[-\frac{2nk^2\nu^2}{m(k+1)}\right].
    \label{eq:serfling}
\end{align}

Following~\cite{mannalath2025}, we evaluate the tails exactly. For a uniformly chosen $k$-subset $\Pi$ of the $m$ rounds, independent of the binary pattern $z$, define
\begin{equation}
 \eps_\pe(\nu):=\max_{z\in\{0,1\}^m}
 \Pr_\Pi\!\left[
 \frac1k\sum_{i\in\Pi}z_i<\Qtol\ \text{and}\
 \frac1n\sum_{i\notin\Pi}z_i\geq\Qtol+\nu\right].
 \label{eq:peprob}
\end{equation}

For a fixed total number of errors, the probability being maximized is a hypergeometric tail, since $\Pi$ samples rounds without replacement. Thus, to compute $\eps_\pe(\nu)$ it suffices to maximize over these hypergeometric CDFs. This can be done efficiently because the maximizer can be solved for; see Appendix~\ref{app:sampling}.

Another statistical question arises in choosing the tolerated quantum bit error rate (QBER) $\Qtol$ such that the protocol satisfies the desired completeness parameter. For independent physical errors of probability $Q$, the number of errors in a test sample of size $k$ is binomial, so the probability that the test aborts is exactly
\begin{equation}
 \Pr[\operatorname{Binom}(k,Q)\geq k\Qtol].
 \label{eq:test-abort}
\end{equation}
We invert this tail to choose the smallest threshold on the implemented integer grid meeting the assigned test-abort budget. Evaluating the distribution directly avoids the excess tolerance introduced by an analytical concentration bound. This matters for key extraction because a larger $\Qtol$ also weakens the entropy estimate. 

We compare the gains from evaluating the tails exactly rather than using concentration bounds. Figure~\ref{fig:sampling-comparison} separates the gains from tail evaluation and hashing. The baseline is the parameter estimation analysis of~\cite{tl2017}, and the comparison quantifies improvements to that analysis. The curves labelled `exact tails' use binomial completeness and hypergeometric parameter-estimation probabilities. Those labelled `analytic tails' use the Serfling bound
in \ref{eq:serfling} which bounds the event with non-strict test acceptance and hence also the strict event here. For the completeness threshold, the analytic-tail curves use the Chernoff bound~\cite[Appendix~III]{mannalath2025}
\begin{equation}
 \Pr[\operatorname{Binom}(k,Q)\geq k\Qtol]
 \leq \exp[-k\,d(\Qtol\|Q)],\qquad Q<\Qtol\leq1,
 \label{eq:chernoff-completeness}
\end{equation}
where $d(p\|q)=p\ln(p/q)+(1-p)\ln[(1-p)/(1-q)]$ uses natural logarithms. We choose the smallest positive integer multiple of $1/k$ for $\Qtol$ that makes this upper bound at most $\eps_{\mathrm{com}}/2$. Each choice of tails is combined with both LHLs, so the comparison includes their effect on both the tolerated error rate and the entropy estimate.

\FloatBarrier

\subsection{Numerical model and implementation}\label{sec:methods-numerics}

The Python code and numerical data used to generate the figures are available on Zenodo~\cite{leftoverscode2026}. For our numerics, we compute the $\eps_{\rm tot}$-secure and $\eps_{\rm com}$-complete key rates guaranteed by the EUR-based security proof we consider and the entropy accumulation-based security proof in \cite{gohtan2026}, at various block sizes $m$ and QBER $Q$. We assume independent errors for this performance model, with the same $Q$ in the two BB84 bases. This can be modeled as Alice and Bob sharing a suitably depolarized Bell state. 

The BB84 protocol we consider can abort at two stages, parameter estimation and error correction. We choose protocol parameters such that the probability of aborting at each of these stages is bounded by $\frac{\eps_{\rm com}}{2}$, and thus the total probability of aborting is bounded by $\eps_{\rm com}$. 

Since each round of communication has probability $Q$ of error, the number of errors in the test sample is given by $\operatorname{Binom}(k,Q)$. To satisfy the completeness condition for the parameter estimation stage, we choose a tolerated QBER $\Qtol$, such that 
\begin{equation}
    \Pr[\operatorname{Binom}(k,Q)\geq k\Qtol] \leq \frac{\eps_{\rm com}}{2}.
\end{equation}

For the one-way error correction stage of the protocol, we assume the leakage model of Ref.~\cite{reconciliation2017}, based on a finite-block normal approximation fitted to low-density parity-check (LDPC) code simulations:
\begin{equation}
 \leakEC\simeq\xi_1nh(Q)-\xi_2\sqrt{nV(Q)}\,
 \Phi^{-1}(\eps_{\mathrm{LDPC}}),\qquad
 V(Q)=Q(1-Q)\left[\log_2\frac{1-Q}{Q}\right]^2.
 \label{eq:leakage}
\end{equation}
Here $\Phi$ is the standard normal cumulative distribution function. The term proportional to $n$ models the leading reconciliation cost, while the square-root term accounts for finite block length and the requested decoding reliability. Thus to achieve the desired probability of aborting we choose $\eps_{\rm LDPC} =\frac{\eps_{\mathrm{com}}}{2}$. For Fig.~\ref{fig:rates}, we use $(\xi_1,\xi_2)=(1.13,3.82)$ at $Q=1\%$ and $(1.07,3.71)$ at $Q=2.5\%$, the respective fits in Ref.~\cite[Eq.~(21), Table~I]{reconciliation2017} at decoding-failure probability $1\%$ and reconciliation block lengths $10^3$ to $10^6$. Figure~\ref{fig:sampling-comparison} also uses $(1.13,3.82)$ at $Q=1\%$. For both panels of Fig.~\ref{fig:satellite}, we assume $(1.13,3.82)$ throughout the QBER range, extending the $1\%$ fit to other $Q$. Using these fits outside their fitted block lengths and at decoding-failure probability $\eps_{\mathrm{com}}/2$ is an extrapolation. Attaining the resulting leakage and reliability is a modeling assumption. 

The asymptotic limit $1-2h(Q)$ shown in Fig.~\ref{fig:rates} assumes capacity-achieving error correction. Our fitted coefficients instead leave an asymptotic leakage of $\xi_1h(Q)$ bits per key round, so the model approaches $1-(1+\xi_1)h(Q)$ when the statistical overhead and test fraction vanish. It therefore cannot reach the plotted limit for $Q>0\%$: extrapolating this finite-block fit to arbitrarily large $n$ is too pessimistic to describe codes approaching capacity.

Since $t$ enters the key length
only through $2^{-t}$ and $H_{*}(\nu,t)$, and $\ell$ is concave in
$2^{-t}$, we can solve analytically for the optimal $t$. This is
equivalent to fixing the share of the security budget $\varepsilon_{\rm tot}$ allotted to
the error correction phase, since $\varepsilon_{\rm cor}(t) = 2^{-t}$. We obtain $\varepsilon_{\rm cor}(t) = \frac{1}{3}(\varepsilon_{\rm tot} - \varepsilon_{\rm pe}(\nu))$. We optimize over
the remaining parameters $k$ and $\nu$ numerically by a grid search (see Appendix \ref{app:numerics} for details).

For the key rates computed for the EAT-based proof, we use the analysis given by \cite{gohtan2026} with modifications to incorporate the completeness condition. The BB84 protocol they consider varies from ours in that each round can be either a generation round or a test round, with some specified probability.Here $\gamma$ is the probability that a round is declared a test round, and
$\delta > 0$ is a protocol parameter: the protocol aborts if fewer than
$m(1-\gamma-\delta)$ generation rounds occur. Thus the number of generation rounds is itself random, distributed as $\operatorname{Binom}(m,1-\gamma)$, and the number of errors observed in test rounds is distributed as $\operatorname{Binom}(m,\gamma Q)$. The completeness analysis therefore also has to account for fluctuations in the number of generation rounds. As in the EUR-based case, we allocate $\frac{\eps_{\rm com}}{2}$ to error correction. The remaining $\frac{\eps_{\rm com}}{2}$ is split equally, $\frac{\eps_{\rm com}}{6}$ each, between the number of generation rounds exceeding $n_{\mathrm{eff}}$, falling below $m(1-\gamma-\delta)$, and parameter estimation (where this condition is added to condition for the parameter estimation stage of the protocol to accept in \cite{gohtan2026}). For this protocol, the parameters of the error correction code need to be decided a priori, but the message length, which is the number of generation rounds, is a random variable. For the sake of our comparison, we assume that the error correction code is designed with a number of generation rounds $n_{\rm eff}$ in mind, and thus can correct errors for $n_{\rm gen}\leq n_{\rm eff}$. The leakage is then computed from \eqref{eq:leakage} with $n = n_{\mathrm{eff}}$ and $\eps_{\mathrm{LDPC}} = \frac{\eps_{\rm com}}{2}$.

\vspace*{-.2\baselineskip}\enlargethispage{\baselineskip}
\begin{acknowledgments}
\vspace*{-.6\baselineskip}

We are indebted to Ernest Y.-Z.\ Tan for discussions on numerical simulations and comments on a draft of the paper, and to Jun Hui Goh for explaining the code used to compute the key rates in~\cite{gohtan2026}.

M.S.T.\ acknowledges support by the Dieter Schwarz Foundation, and from the National Quantum Scholarships Scheme (Master) Scholarship.
B.R.\ acknowledges the support of the Japan Science and Technology Agency (JST) PRESTO grant no.\ JPMJPR25FB. M.T.\ is supported by the National Research Foundation, Prime Minister's Office, Singapore under its Campus for Research Excellence and Technological Enterprise programme, via the project Quantum Security and Resilience for Emerging Technologies (QUASAR-CREATE).

\smallskip
\textbf{Statement on use of AI.}\,---\,ChatGPT 6 Astra and Claude Opus 5 were used to assist with the numerical simulations, refining the mathematical proofs, and with the preparation of the manuscript. The authors take responsibility for any errors.

\end{acknowledgments}
\clearpage


\begin{table}[!htbp]
 \centering
 \small
 \begin{tabular}{@{}lr@{\hspace{10pt}}r@{}}
  \toprule
  Parameter & Figure~\ref{fig:rates} & Figure~\ref{fig:satellite}(a) \\
  \midrule
  Total block size $m$ & $100000$ & $2000$ \\
  Test and key blocks $(k,n)$ & $(6495,93505)$ & $(561,1439)$ \\
  QBER $Q$ & $1\%$ & $1\%$ \\
  Test threshold $\Qtol$ & $1.339\%$ & $2.317\%$ \\
  Total security target $\eps_{\mathrm{tot}}$ & $10^{-10}$ & $10^{-10}$ \\
  Completeness target $\eps_{\mathrm{com}}$ & $1\%$ & $1\%$ \\
  Test and reconciliation abort budgets & $0.5\%\text{ each}$ & $0.5\%\text{ each}$ \\
  Leakage coefficients $(\xi_1,\xi_2)$ & $(1.13,3.82)$ & $(1.13,3.82)$ \\
  Analysis parameter $\nu$ & $1.151\%$ & $7.575\%$ \\
  Syndrome length $\leakEC$ (bits) & $10522$ & $378$ \\
  Verification length $t$ (bits) & $39$ & $37$ \\
  Privacy amplification coefficient $\kappa$ & $4/3$ & $4/3$ \\
  Sampling failure bound $\eps_\pe(\nu)$ & $9.430\times10^{-11}$ & $7.520\times10^{-11}$ \\
  Correctness bound $\eps_{\mathrm{cor}}$ & $1.819\times10^{-12}$ & $7.276\times10^{-12}$ \\
  PA term in Eq.~\eqref{eq:numerical-new-security} & $3.344\times10^{-12}$ & $1.648\times10^{-11}$ \\
  Total security bound & $9.946\times10^{-11}$ & $9.896\times10^{-11}$ \\
  Secret key length $\ell$ (bits) & $67147$ & $284$ \\
  Key rate $\ell/m$ & $0.671$ & $0.142$ \\
  \bottomrule
 \end{tabular}
 \caption{\textbf{Parameters of the marked new LHL points in Figures~\ref{fig:rates} and~\ref{fig:satellite}(a).} Both points have $Q=1\%$ and use exact hypergeometric sampling, the reconciliation model and the privacy amplification bound in equation~\eqref{eq:numerical-new-security}. Displayed decimal values are rounded.}
 \label{tab:example-parameters}
\end{table}

\bibliographystyle{apsca}
\bibliography{references}

\FloatBarrier
\clearpage
\appendix
\section{Leftover hashing}\label{app:proofs}

In this appendix we discuss the technical details of the privacy amplification bounds in terms of smooth entropies, in particular giving a complete proof of Theorem~\ref{thm:pa} together with its extensions.

\subsection{Entropy, distance and hashing conventions}\label{app:conventions}

We use the following entropy and hashing conventions throughout, which are consistent with~\cite{tl2017}. The general background is developed in~\cite{tomamichelbook}.

All logarithms are to base two. 
For $\rho_{AB}\geq0$ with $\Tr\rho_{AB}\leq1$, we define the min-entropy as
\begin{equation}\begin{aligned}
 \hmin(A|B)_\rho &= - \inf_{\sigma_B} D_{\max}(\rho_{AB} \| \id_A \otimes \sigma_E)\\
 &=  \sup\{\lambda:\rho_{AB}\leq2^{-\lambda}\id_A\otimes\sigma_B,
                    \ \sigma_B\geq0,\ \Tr\sigma_B=1\},
\end{aligned}\end{equation}
which is sometimes also denoted $H_{\min}^{\uparrow}(A|B)$ in the literature due to the involved optimisation over $\sigma_B$. 
The generalised root fidelity and purified distance are
\begin{align}
 \overline F(\rho,\rho')&=
 \norm{\sqrt\rho\sqrt{\rho'}}_1+
 \sqrt{(1-\Tr\rho)(1-\Tr\rho')},\\
 P(\rho,\rho')&=\sqrt{1-\overline F(\rho,\rho')^2},
\end{align}
where $\norm{X}_1 = \Tr\sqrt{XX^\dagger}$ is the trace norm. 
The smoothed quantity $\hmin^\eps$ is the supremum of the min-entropy over subnormalised states $\rho'$ that are $\eps$-close to $\rho$:
\begin{equation}\begin{aligned}
    \hmin^\eps(A|B)_\rho &= \sup \left\{ \hmin(A|B)_{\rho'} : \rho'_{AB} \geq 0,\; \Tr \rho'_{AB} \leq 1,\; P(\rho_{AB},\rho'_{AB}) \leq \eps \right\},
\end{aligned}\end{equation}
where $\eps<\sqrt{\Tr\rho}$.

Let $\mathcal H=\{h_s:\mathcal X\to\mathcal Z\}_{s\in\mathcal S}$ be a finite indexed hash family. A seed $S$, uniform on $\mathcal S$ and independent of $XE$, selects $h=h_S$; the output is $Z=h(X)$. Let $\pi_Z=\id_Z/|\mathcal Z|$ and write
$\omega_{ZE}^{s}=\sum_x|h_s(x)\rangle\langle h_s(x)|\otimes\rho_E^x$
for the output when $\rho_{XE}=\sum_x|x\rangle\langle x|\otimes\rho_E^x$ and $S=s$. Because $S$ is classical and independent of $XE$, the secrecy quantity~\eqref{eq:secrecy} is also the trace distance between $\omega_{ZSE}$ and $\pi_Z\otimes\omega_{SE}$, so it explicitly includes the publicly revealed seed. We call a family $2^*$-universal when
\begin{equation}
 \Pr_S[h_S(x)=h_S(x')]=\frac1{|\mathcal Z|}\qquad(x\ne x').
 \label{eq:exactcollision}
\end{equation}
This is the exact-collision convention found for example in Refs.~\cite{tl2017,dupuis2023,rt2026}. It is a stronger assumption than ordinary 2-universality (which would replace equality in equation~\eqref{eq:exactcollision} with an upper bound) but a weaker one than strong 2-universality, which would require pairwise independent uniform outputs~\cite{wegman_1981}. In Section~\ref{app:hash} we will show how our main result can be extended to 2-universal and almost 2-universal hash families, at the expense of incurring an additional error term.

\subsection{Measured smoothing and its operator representation}\label{app:measured}

We denote the measured smooth collision divergence by $D_2^{\eta,\mathbb{M}}$ following Ref.~\cite{rt2026}. To formalise its definition, we begin with the classical (commuting) case. For a subnormalised classical distribution $p$ and a nonnegative reference vector $q$, define
\begin{equation}
 D_2^{\eta,\mathrm T}(p\|q)
 \coloneqq\log_2\inf_{\substack{0\leq p'\leq p\\\sum_x(p_x-p'_x)\leq\eta}}
                  \sum_x\frac{(p'_x)^2}{q_x}.
 \label{eq:classicald2}
\end{equation}
The superscript T here indicates that this is equivalent to smoothing over a generalised total variation distance (trace distance) ball: taking the componentwise minimum of any feasible distribution with $p$ cannot increase the objective or its distance from $p$, so the restriction to $p' \leq p$ is done without loss of generality. For consistency we take $0/0=0$ and $a/0=+\infty$ for $a>0$. For a subnormalised state $\rho$, a positive operator $\sigma$ and $0\leq\eta<\Tr\rho$, the quantum definition is then
\begin{equation}
 D_2^{\eta,\mathbb{M}}(\rho\|\sigma)
 \coloneqq\sup_{\mathcal M\in\mathbb M}
 D_2^{\eta,\mathrm T}\bigl(\mathcal M(\rho)\|\mathcal M(\sigma)\bigr),
 \label{eq:measuredd2}
\end{equation}
where $\mathbb M$ contains all quantum-to-classical measurement channels~\cite[Definition~1]{rt2026}.

Although this definition involves an optimisation over measurements, its useful feature for the proof is an equivalent description directly in terms of operators. Theorem~2 of~\cite{rt2026} shows that the measured smoothing can be represented by a Hermitian approximation $R\leq\rho$, with $\Tr(\rho-R)\leq\eta$. Here $R$ may have negative eigenvalues: it is an auxiliary object in the proof, not to be interpreted as a physical state. The residual $\rho-R$ remains positive and has small trace, so its contribution to the final security error can be easily controlled. The quadratic cost of $R$ is then measured by the measured R\'enyi relative entropy of order 2, which corresponds to the Bures norm $\norm{R}_\sigma^2$. For $\sigma>0$, this norm is
\begin{equation}
 \norm{R}_\sigma^2=\Tr[R\mathcal J_\sigma^{-1}(R)],
 \qquad \mathcal J_\sigma(B)=\tfrac12(\sigma B+B\sigma).
 \label{eq:bures}
\end{equation}
For diagonal operators this reduces to the weighted sum $\sum_x R_x^2/\sigma_x$ appearing in the classical definition. The variational representation of~\cite[Theorem~2]{rt2026} is then
\begin{equation}
 2^{D_2^{\eta,\mathbb{M}}(\rho\|\sigma)}
 =\inf_{\substack{R=R^\dagger,\ R\leq\rho\\\Tr(\rho-R)\leq\eta}}
       \norm R_\sigma^2.
 \label{eq:hermitiand2}
\end{equation}
The same definition and variational form can be extended to subnormalised inputs --- although stated there for quantum states, it is explicit that the proofs of~\cite{rt2026} do not make use of the normalisation of $\rho$. For singular $\sigma$, the definition is extended to the (semi)norm defined by $\norm R_\sigma^2=\lim_{\tau\downarrow0}\norm R_{\sigma+\tau\id}^2$, possibly infinite.

\subsection{Sharper comparison with smooth min-entropy}\label{app:comparison}

\begin{proposition}[Comparison with smooth min-entropy]\label{prop:entropycomparison}
Let $\rho_{XE}$ be a subnormalised classical--quantum state and $0\leq\eps<\sqrt{\Tr\rho}$. Then
\begin{equation}
 \inf_{\sigma_E\in\mathcal S(E)}
 D_2^{\eps^2,\,\mathbb{M}}(\rho_{XE}\|\id_X\otimes\sigma_E)
 \leq-\hmin^\eps(X|E)_\rho+\log_2\kappa(\eps),
 \label{eq:appendixentropycomparison}
\end{equation}
where $\mathcal S(E)$ denotes the normalised states on $E$ and $\kappa$ is given by
\begin{equation}
 \kappa(\eps)=
 \begin{cases}
  (1-\eps^2)(1+3\eps^2)&\text{ if } 0\leq\eps\leq1/\sqrt3\\
  4/3&\text{ if } 1/\sqrt3<\eps<1
 \end{cases} \,\leq \,\frac43\,.
 \label{eq:kappa_appendix}
\end{equation}
In particular, the additive correction tends to zero as $\eps\to0$.
\end{proposition}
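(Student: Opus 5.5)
The plan is to fix an arbitrary feasible point of the smooth min-entropy and build an explicit Hermitian competitor in the variational formula~\eqref{eq:hermitiand2}. Concretely, take a subnormalised $\rho'_{XE}$ with $P(\rho,\rho')\leq\eps$ and a state $\sigma_E$ with $\rho'_{XE}\leq 2^{-\lambda}\,\id_X\otimes\sigma_E$, where $\lambda$ is arbitrarily close to $\hmin^\eps(X|E)_\rho$. Write $\tau=\id_X\otimes\sigma_E$. It then suffices to produce a Hermitian $R$ such that
\[
R\leq\rho,\qquad \Tr(\rho-R)\leq\eps^2,\qquad \norm{R}_\tau^2\leq\kappa(\eps)\,2^{-\lambda}.
\]
Taking the infimum over $\sigma_E$ and the supremum over $\lambda$ then gives~\eqref{eq:appendixentropycomparison}. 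Since $\rho$ is classical on $X$, I would first argue that $\rho'$ may be taken classical on $X$ as well, by pinching in the $X$ basis. Pinching does not decrease the generalised fidelity and preserves the operator inequality. After this reduction everything is block diagonal in $x$, and for each $x$ we deal with a pair $\rho^x_E,\rho'^x_E$ on $E$.

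The construction I would use is the ``completed square'' ansatz. For any operator $A$,
\[
R \coloneqq \sqrt\rho\,A^\dagger+A\sqrt\rho-AA^\dagger
\]
satisfies $\rho-R=(\sqrt\rho-A)(\sqrt\rho-A)^\dagger\geq0$, so $R\leq\rho$ holds automatically and $R$ is Hermitian but possibly indefinite. This is exactly where the Hermitian (rather than positive) smoothing of~\eqref{eq:hermitiand2} is used. Choose $A=c\,\sqrt{\rho'}\,U^\dagger$, where $U$ is the polar unitary with $\Tr\sqrt\rho\,U\sqrt{\rho'}=\norm{\sqrt\rho\sqrt{\rho'}}_1$ (taken blockwise in $x$), and $c>0$ is a scalar to be optimised. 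Then
\[
\Tr(\rho-R)=\Tr\rho-2c\norm{\sqrt\rho\sqrt{\rho'}}_1+c^2\Tr\rho'.
\]
Using the generalised fidelity $\overline F\geq\sqrt{1-\eps^2}$ and optimising $c$, I expect this to be at most $\eps^2$. The term $\sqrt{(1-\Tr\rho)(1-\Tr\rho')}$ in $\overline F$ is what makes the normalisation work out, and this step is routine algebra.

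The main obstacle is the quadratic cost $\norm R_\tau^2$, because the cross terms $\sqrt\rho\,A^\dagger$ involve $\rho$, which is not dominated by $\tau$. My first attempt would be to use the dual form $\norm R_\tau^2=\sup_{Y=Y^\dagger}\bigl[2\Tr(RY)-\Tr(Y\mathcal J_\tau(Y))\bigr]$ together with $\rho'\leq2^{-\lambda}\tau$. The dominant term $c^2\rho'$ then contributes at most $c^2\,2^{-\lambda}\Tr\rho'$-type quantities, and Cauchy--Schwarz in the Bures inner product can absorb the cross terms with $\sqrt\rho$ into $\norm{\sqrt{\rho'}}$-weighted quantities. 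If this fails because $\rho\gg\rho'$ on some subspace, the fallback is to modify the construction: first compress $\rho$ onto the support where it is comparable to $\rho'$, or equivalently replace $\sqrt\rho$ by $\sqrt{\rho'}$-dominated pieces in $A$. The discarded part is then charged to the trace budget $\eps^2$.

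Finally, I expect the resulting bound to take the form $\norm R_\tau^2\leq g(c,F)\,2^{-\lambda}$ for an explicit polynomial $g$. Optimising $c$ subject to the trace constraint should yield $(1-\eps^2)(1+3\eps^2)$, whose maximum $4/3$ is attained at $\eps^2=1/3$. For $\eps>1/\sqrt3$ one simply reuses the choice of $c$ from $\eps=1/\sqrt3$: a smaller smoothing ball is still admissible under the larger trace budget, and this gives the constant $4/3$. The statement $\kappa(\eps)=1+2\eps^2+O(\eps^4)$ is then immediate.
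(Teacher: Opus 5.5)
Your setup coincides with the paper's. With $A=c\sqrt{\rho'}U^\dagger$ and $c=g=\overline F(\rho,\rho')$, the operator $R=\rho-(\sqrt\rho-A)(\sqrt\rho-A)^\dagger$ is exactly the one in Lemma~\ref{lem:construction}. Your trace estimate $\Tr(\rho-R)=\Tr\rho-2gf+g^2\Tr\rho'\leq1-g^2\leq\eps^2$, with $f=\norm{\sqrt\rho\sqrt{\rho'}}_1$, is correct.

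The gap is the step you flag as the main obstacle: the bound on $\norm{R}_\tau^2$ is only conjectured, not proved. The fallback of compressing $\rho$ cannot work, because at $g^2=1-\eps^2$ the completed square already uses the whole budget $\eps^2$, so nothing is left to pay for a compression. Cauchy--Schwarz in the Bures inner product is also not what you need. The missing idea is that every term of $R$ has $\sqrt{\rho'}$ as a right factor, so only $\rho'$, never $\rho$, has to be dominated by $\tau$:
\[
R=\Ree\bigl[B\sqrt{\rho'}\bigr]=\Ree\bigl[BL\sqrt\tau\bigr],\qquad B=2g\sqrt\rho\,U-g^2\sqrt{\rho'},\quad L=\sqrt{\rho'}\,\tau^{-1/2},\quad \norm{L}_\infty^2\leq2^{-\lambda},
\]
with the inverse taken on the support of $\tau$. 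Put $Y=BL$ into the primal form $\norm{R}_\tau=\inf\{\norm{Y}_2:R=\Ree(Y\sqrt\tau)\}$ of Lemma~\ref{lem:variational}. Equivalently, in your dual form use Hilbert--Schmidt Cauchy--Schwarz, $2\Ree\Tr(B\sqrt{\rho'}Y)\leq2\norm{B}_2\norm{\sqrt{\rho'}Y}_2\leq2\norm{B}_2\,2^{-\lambda/2}(\Tr Y^2\tau)^{1/2}$, and optimise over the scale of $Y$. Either way,
\[
\norm{R}_\tau^2\leq2^{-\lambda}\norm{B}_2^2=2^{-\lambda}\bigl(4g^2\Tr\rho-4g^3f+g^4\Tr\rho'\bigr)=2^{-\lambda}\Bigl[g^2(4-3g^2)-g^2\bigl(2\sqrt{1-\Tr\rho}-g\sqrt{1-\Tr\rho'}\bigr)^2\Bigr]\leq2^{-\lambda}g^2(4-3g^2).
\]

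Your account of the constant also needs correcting. $\kappa(\eps)$ does not come from optimising $c$. With $c=g$ fixed, it is the worst case $\max_{x\in[1-\eps^2,1]}x(4-3x)$ over the unknown fidelity $x=g^2$. This equals $(1-\eps^2)(1+3\eps^2)$ when $\eps^2\leq1/3$, and the global maximum $4/3$, attained at $x=2/3$, otherwise.

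Your reduction for $\eps>1/\sqrt3$ runs the wrong way. The chain $D_2^{\eps^2,\mathbb M}\leq D_2^{1/3,\mathbb M}\leq-\hmin^{1/\sqrt3}(X|E)_\rho+\log_2\tfrac43$ is weaker than the claim, because $\hmin^{1/\sqrt3}(X|E)_\rho\leq\hmin^{\eps}(X|E)_\rho$. You must handle smoothing states with $g^2<2/3$ directly, and the bound $g^2(4-3g^2)\leq4/3$ does this automatically.
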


We stress here that the smoothing parameter on the left is $\eps^2$ in the measured smoothing sense of~\cite{rt2026}, whereas $\hmin^\eps$ uses the conventional purified distance smoothing. This sharper comparison is the new ingredient here that improves on the estimates found in~\cite[Corollary~15]{rt2026}. 

We will use a variational form of the Bures norm, which can already be deduced from the expressions derived in~\cite{rt2026}, and which we prove here for completeness.
\begin{lemma}\label{lem:variational}
For any $R = R^\dagger$ and $\sigma \geq 0$, the Bures seminorm can be expressed as
\begin{equation}\begin{aligned}
  \norm{R}_{\sigma} &= \inf \left\{ \norm{Y}_{2} \;:\; R = \Ree\!\left(  Y \sqrt{\sigma} \right) \right\}
\end{aligned}\end{equation}
where $\Ree(Z) = \frac12(Z+Z^\dagger)$ denotes the Hermitian part of a matrix.
\end{lemma}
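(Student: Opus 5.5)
The plan is to first prove the identity for invertible $\sigma>0$ and then extend to singular $\sigma$ by the limit $\tau\downarrow0$ used to define the seminorm. For $\sigma>0$, the operator $\mathcal J_\sigma$ is invertible on Hermitian matrices (in the eigenbasis of $\sigma$ it acts entrywise as multiplication by $\tfrac12(s_i+s_j)>0$). So every Hermitian $R$ can be written as $R=\mathcal J_\sigma(B)=\Ree(\sigma B)$ with $B=\mathcal J_\sigma^{-1}(R)$ Hermitian. This gives $\norm{R}_\sigma^2=\Tr[R B]$.

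\emph{Upper bound on the infimum.} Take $Y_0=B\sqrt\sigma$. Since $B$ and $\sigma$ are Hermitian, $\Ree(Y_0\sqrt\sigma)=\Ree(B\sigma)=\tfrac12(B\sigma+\sigma B)=R$, so $Y_0$ is feasible. Its norm is
\[
\norm{Y_0}_2^2=\Tr[\sqrt\sigma B^2\sqrt\sigma]=\Tr[B\sigma B]=\Tr[B\,\Ree(\sigma B)]=\Tr[BR],
\]
where the third equality uses cyclicity: $\Tr[B\sigma B]=\Tr[BB\sigma]$, so $\Tr[B\sigma B]=\tfrac12\Tr[B(\sigma B+B\sigma)]$. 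Hence the infimum is at most $\norm{R}_\sigma$.

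\emph{Lower bound.} Let $Y$ be any matrix with $\Ree(Y\sqrt\sigma)=R$. Then
\[
\norm{R}_\sigma^2=\Tr[BR]=\Ree\Tr[BY\sqrt\sigma]=\Ree\langle Y_0,Y\rangle_{HS}\le\norm{Y_0}_2\norm{Y}_2,
\]
using $\Tr[B\,\Ree(Z)]=\Ree\Tr[BZ]$ for Hermitian $B$ and $\langle Y_0,Y\rangle_{HS}=\Tr[\sqrt\sigma B Y]$. Combined with $\norm{Y_0}_2=\norm{R}_\sigma$, this gives $\norm{R}_\sigma\le\norm{Y}_2$. Thus the infimum equals $\norm{R}_\sigma$ and is attained at $Y_0$; this is just the orthogonal-projection (minimum-norm solution) argument for the linear constraint.

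\emph{Singular $\sigma$.} This is the main obstacle. Write $\sigma_\tau=\sigma+\tau\id$ and $v(\sigma)$ for the right-hand side of the lemma.

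For the inequality $\norm{R}_\sigma\le v(\sigma)$, take any $Y$ feasible for $\sigma$. I want feasible $Y_\tau$ for $\sigma_\tau$ with $\norm{Y_\tau}_2\to\norm{Y}_2$. The natural choice is $Y_\tau=Y\sqrt\sigma\,\sigma_\tau^{-1/2}$, which gives $Y_\tau\sqrt{\sigma_\tau}=Y\sqrt\sigma$ exactly, since $\sqrt\sigma$ and $\sigma_\tau^{-1/2}$ commute. It has $\norm{Y_\tau}_2\le\norm{Y}_2$ because $\norm{\sqrt\sigma\,\sigma_\tau^{-1/2}}_\infty\le1$. Then $\norm{R}_{\sigma_\tau}\le\norm{Y}_2$ for every $\tau$, and taking $\tau\downarrow0$ gives $\norm{R}_\sigma\le v(\sigma)$.

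For the reverse inequality, assume $\norm{R}_\sigma<\infty$; otherwise there is nothing to prove. The minimisers $Y_0^{(\tau)}$ for $\sigma_\tau$ are then bounded in $\norm{\cdot}_2$, so a subsequence converges to some $Y$. By continuity, $\Ree(Y\sqrt\sigma)=\lim\Ree(Y_0^{(\tau)}\sqrt{\sigma_\tau})=R$. Lower semicontinuity of the norm gives $\norm{Y}_2\le\liminf\norm{Y_0^{(\tau)}}_2=\norm{R}_\sigma$, so $v(\sigma)\le\norm{R}_\sigma$ and the two sides agree.

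Finally, if no feasible $Y$ exists, $v(\sigma)=+\infty$. The compactness argument just given shows that in this case $\norm{R}_\sigma$ must also be infinite, which is consistent with the conventions above.
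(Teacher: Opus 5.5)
Your proof is correct, and it takes a genuinely different route from the paper's.

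\textbf{The paper's route.} It starts from the dual formula $\norm{R}_\sigma^2=\sup_{H=H^\dagger}[2\Tr HR-\Tr H^2\sigma]$, imported from~\cite{rt2026}. It lifts this to a semidefinite program via a Schur complement and passes to the Lagrange dual using strong duality. It then eliminates variables by a second Schur complement and the substitution $Y=Q_{12}^\dagger\sigma^{-1/2}$.

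\textbf{Your route.} You stay entirely on the primal side. You exhibit the explicit feasible point $Y_0=\mathcal J_\sigma^{-1}(R)\sqrt\sigma$ with $\norm{Y_0}_2=\norm{R}_\sigma$. You then show by Cauchy--Schwarz that no feasible $Y$ has smaller norm. Equivalently, $Y_0$ is orthogonal, in the real Hilbert--Schmidt inner product, to every $Z$ with $\Ree(Z\sqrt\sigma)=0$, so it is the minimum-norm solution of the linear constraint.

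\textbf{What your approach buys.}
\begin{itemize}
\item It is self-contained: it needs neither the imported variational formula nor strong duality, so no constraint qualification has to be checked.
\item It shows the infimum is attained and identifies the minimiser.
\item It handles singular $\sigma$ rigorously through the $\tau\downarrow0$ definition of the seminorm. Your two limiting arguments both work: the rescaling $Y_\tau=Y\sqrt\sigma\,\sigma_\tau^{-1/2}$, and compactness of the minimisers $Y_0^{(\tau)}$. By contrast, the paper's second Schur-complement step and its substitution are written with $\sigma^{-1}$ and $\sigma^{-1/2}$, so as stated they tacitly assume $\sigma>0$. The singular case there would need pseudo-inverses and a range condition.
\item The only direction used downstream, in Lemma~\ref{lem:construction}, is $\norm{R}_\sigma\le\norm{Y}_2$ for feasible $Y$. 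In your argument this is just the two-line Cauchy--Schwarz step.
\end{itemize}

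\textbf{What the paper's approach buys.} It embeds the identity in the SDP-duality framework of~\cite{rt2026}, making explicit its link to the dual characterisation of the measured collision divergence.

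\textbf{Minor points.}
\begin{itemize}
\item Commutation is not needed for $Y_\tau\sqrt{\sigma_\tau}=Y\sqrt\sigma$; it follows simply from $\sigma_\tau^{-1/2}\sigma_\tau^{1/2}=\id$.
\item Dividing by $\norm{R}_\sigma$ in the Cauchy--Schwarz step presumes $\norm{R}_\sigma>0$; the case $\norm{R}_\sigma=0$ is trivial.
\end{itemize}
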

\begin{proof}
We begin similarly as in Corollary~7 of~\cite{rt2026}, establishing a dual expression
\begin{equation}\begin{aligned}
  \norm{R}_{\sigma}^2 &\texteq{(i)} \sup_{H=H^\dagger} \left[ 2 \Tr H R - \Tr H^2 \sigma \right]\\[-3pt]
  &\texteq{(ii)} \sup \lset 2 \Tr H R - \Tr C \sigma \bar  C \geq H^2 \rset\\
  &\texteq{(iii)} \sup \lset 2 \Tr H R - \Tr C \sigma \bar  \begin{pmatrix}C & H \\ H & \id \end{pmatrix} \geq 0 \rset\\
  &\texteq{(iv)} \inf \lset \Tr Q_{22} \bar Q = \begin{pmatrix} Q_{11} & Q_{12} \\ Q_{21} & Q_{22} \end{pmatrix} \geq 0,\; Q_{12} + Q_{21} = 2 R,\; Q_{11} = \sigma \rset.
\end{aligned}\end{equation}
Here, (i) is a variational form shown e.g.\ in the proof of~\cite[Lemma~4]{rt2026}, (ii) follows since the positivity of $\sigma$ ensures that $\Tr C \sigma \geq \Tr H^2 \sigma$ for any $C \geq H^2$, (iii) is a standard Schur complement result~\cite[Thm.~1.3.3, Ex.~1.3.5]{bhatia_2007}, and (iv) is by strong Lagrange duality. Using Schur complements again, the positivity of $Q$ is equivalent to the conditions that $Q_{22} \geq 0$, $Q_{21} = Q_{12}^\dagger$ and $Q_{22} \geq Q_{12}^\dagger Q_{11}^{-1} Q_{12}$, and hence 
\begin{equation}\begin{aligned}\label{eq:variational_quadratic}
  \norm{R}_{\sigma}^2 &= \inf \lset \Tr Q_{22} \bar Q_{22} \geq Q_{12}^\dagger \sigma^{-1} Q_{12},\; Q^\dagger_{12} + Q^{\vphantom{\dagger}}_{12} = 2 R \rset\\
  &= \inf \lset \Tr Q_{12}^\dagger \sigma^{-1} Q_{12} \bar Q^\dagger_{12} + Q^{\vphantom{\dagger}}_{12}= 2 R \rset\\
  &= \inf \lset \Tr Y^\dagger Y \bar  Y\sigma^{1/2} + \sigma^{1/2}Y^\dagger = 2 R \rset
\end{aligned}\end{equation}
where we defined $Y = Q^\dagger_{12} \sigma^{-1/2} $.
\end{proof}

The result then follows by constructing an explicit feasible solution for the variational form of $D_2^{\eps^2,\mathbb{M}}$ from a feasible approximation of $\rho$ in purified distance. We state the exact estimate separately for clarity.

\begin{lemma}[Hermitian approximation of purified distance]
\label{lem:construction}
Let $\rho,\rho'$ be subnormalised states with $P(\rho,\rho') = \sqrt{1 - g^2}$ where $g=\overline F(\rho,\rho')$ is the generalised root fidelity. Let $\sigma$ be any state such that $\rho'\leq\lambda\sigma$. Then there exists a Hermitian operator $R\leq\rho$ such that
\begin{equation}
 \norm{\rho-R}_{1} = \Tr(\rho-R)\leq1-g^2,
 \qquad \norm R_\sigma^2\leq\lambda g^2(4-3g^2).
\end{equation}
If $\rho$ is classical--quantum, $R$ can be taken to be classical--quantum too.
\end{lemma}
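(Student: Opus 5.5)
The plan is to build $R$ as a ``completed square'' around $\sqrt\rho$, so that $R\leq\rho$ holds automatically. For any operator $B$ set
\begin{equation*}
 R \coloneqq \sqrt\rho\, B^\dagger + B\sqrt\rho - BB^\dagger ,
 \qquad\text{so that}\qquad
 \rho - R = (\sqrt\rho - B)(\sqrt\rho - B)^\dagger \geq 0 .
\end{equation*}
Then $R$ is Hermitian, $R\leq\rho$, and $\norm{\rho-R}_1=\Tr(\rho-R)$. I would choose $B = c\,\sqrt{\rho'}\,W$, where $W$ is a unitary from the polar decomposition of $\sqrt{\rho'}\sqrt\rho$. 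This choice makes $\Tr\sqrt\rho B^\dagger = c\norm{\sqrt\rho\sqrt{\rho'}}_1 \eqqcolon cF$, which is real and nonnegative. The constant $c\geq 0$ will be tuned later. This gives
\begin{equation*}
 \Tr(\rho-R) = \Tr\rho - 2cF + c^2\Tr\rho' .
\end{equation*}

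For the Bures norm I would use Lemma~\ref{lem:variational}, writing $R=\Ree(Y\sqrt\sigma)$ with explicit $Y$. First, $R = \Ree\bigl((2\sqrt\rho - B)B^\dagger\bigr)$, and $B^\dagger = c\,W^\dagger\sqrt{\rho'}$. Next, $\rho'\leq\lambda\sigma$ gives, by Douglas' factorisation lemma applied to the adjoints, $\sqrt{\rho'} = \sqrt\lambda\,L\sqrt\sigma$ with $\norm{L}_\infty\leq1$. So I take $Y = c\sqrt\lambda\,(2\sqrt\rho - B)W^\dagger L$. Then
\begin{equation*}
 \norm R_\sigma^2 \leq \norm Y_2^2 \leq \lambda c^2\norm{2\sqrt\rho - B}_2^2 = \lambda c^2\bigl(4\Tr\rho - 4cF + c^2\Tr\rho'\bigr).
\end{equation*}
If $\sigma$ is singular, I would run this with $\sigma+\tau\id$ and let $\tau\downarrow0$.

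In the normalised case, taking $c=F$ (more generally $c=F/\Tr\rho'$) gives $\Tr(\rho-R)=1-F^2$ and $\norm R^2_\sigma\leq\lambda F^2(4-3F^2)$, exactly the claim with $g=F$.

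The main obstacle is the subnormalised case, where $g=F+\sqrt{(1-\Tr\rho)(1-\Tr\rho')}$ rather than $F$. Here I must choose $c$ (first try $c=g$, or $c$ chosen relative to $\Tr\rho'$) so that both of the following hold simultaneously:
\begin{equation*}
 \Tr\rho - 2cF + c^2\Tr\rho' \leq 1-g^2
 \qquad\text{and}\qquad
 c^2\bigl(4\Tr\rho - 4cF + c^2\Tr\rho'\bigr) \leq g^2(4-3g^2).
\end{equation*}
I expect both to reduce to scalar inequalities in $(\Tr\rho,\Tr\rho',F)$, provable via $F\leq\sqrt{\Tr\rho\Tr\rho'}$ and Cauchy--Schwarz. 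Alternatively, I would embed the states as normalised ones by adjoining an orthogonal flag carrying weights $1-\Tr\rho$ and $1-\Tr\rho'$, apply the normalised construction there, and compress back. That route would need a check that compression preserves $R\leq\rho$ and does not increase the Bures norm against $\sigma$.

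For the classical--quantum claim, I would first replace $\rho'$ by its pinching in the $X$ basis. This does not decrease the fidelity with the classical--quantum $\rho$, and it keeps $\rho'\leq\lambda\sigma$ whenever $\sigma$ has the form $\id_X\otimes\sigma_E$, as in the application. Then $\sqrt\rho$, $\sqrt{\rho'}$ and $W$ can all be taken block diagonal, so $B$ and hence $R$ are block diagonal, i.e.\ classical--quantum.
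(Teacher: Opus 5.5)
Your construction is the paper's. With $U=W^\dagger$ and $c=g$, your $R=\sqrt\rho B^\dagger+B\sqrt\rho-BB^\dagger$ is exactly $\rho-(\sqrt\rho-gU\sqrt{\rho'})^\dagger(\sqrt\rho-gU\sqrt{\rho'})$, and your Douglas factor plays the role of $L=\sqrt{\rho'}\sigma^{-1/2}$. Your formulas $\Tr(\rho-R)=\Tr\rho-2cF+c^2\Tr\rho'$ and $\norm{R}_\sigma^2\leq\lambda c^2(4\Tr\rho-4cF+c^2\Tr\rho')$ are the ones the paper obtains.

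The gap is that you stop where the lemma actually has content. It is stated for subnormalised states and the \emph{generalised} fidelity $g$, but you only complete the normalised case and leave the two scalar inequalities as an expectation. The choice of $c$ also matters: the option $c=F/\Tr\rho'$ that you float fails. For example, take $\Tr\rho=1$, $\rho'=\tfrac14\rho$, $\sigma=\rho$, $\lambda=\tfrac14$. Then $g=\tfrac12$, $B=\sqrt\rho$ and $R=\rho$, so $\norm{R}_\sigma^2=1>\tfrac{13}{64}=\lambda g^2(4-3g^2)$. The missing step is to take $c=g$ and use the definition $F=g-\sqrt{(1-\Tr\rho)(1-\Tr\rho')}$, not $F\leq\sqrt{\Tr\rho\Tr\rho'}$. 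This turns both inequalities into identities with a subtracted square:
\begin{align*}
 \Tr\rho-2gF+g^2\Tr\rho' &= 1-g^2-\bigl(\sqrt{1-\Tr\rho}-g\sqrt{1-\Tr\rho'}\bigr)^2,\\
 4\Tr\rho-4gF+g^2\Tr\rho' &= 4-3g^2-\bigl(2\sqrt{1-\Tr\rho}-g\sqrt{1-\Tr\rho'}\bigr)^2.
\end{align*}
This is exactly how the paper concludes.

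Your flag-embedding fallback would also work, but two points are missing. First, you must extend $\sigma$ too, e.g.\ $\hat\sigma=\sigma\oplus(1-\Tr\rho')/\lambda$. This is unnormalised, which is harmless because your Bures bound never uses $\Tr\sigma=1$. Second, for block-diagonal $\hat\sigma$ the map $\mathcal J_{\hat\sigma}^{-1}$ acts blockwise, so compressing to the original block cannot increase the Bures norm. Since the polar unitary becomes $W\oplus1$, this reproduces the same $R$ with $c=g$.

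In the classical--quantum step, keep $c=g$ computed from the original pair rather than rerunning with the larger fidelity $g'\geq g$ of the pinched $\rho'$. At fixed $c$, both $\Tr(\rho-R)$ and your Bures bound decrease in $F$, so pinching only helps. By contrast, $x\mapsto x(4-3x)$ is not monotone, so $\lambda g'^2(4-3g'^2)$ can exceed $\lambda g^2(4-3g^2)$.
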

\begin{proof}
Our construction is based on the definition of Bures distance, which we recall to be
\begin{equation}\begin{aligned}
  \min_{U \text{ unitary}} \norm{\sqrt{\rho} - U \sqrt{\rho'}}_{2}^2 &= \min_{U \text{ unitary}} \Tr \left( \sqrt{\rho} - U \sqrt{\rho'} \right)^\dagger \left( \sqrt{\rho} - U \sqrt{\rho'} \right)\\
  &= \Tr \rho + \Tr \rho' - \max_{U \text{ unitary}} 2 \Ree \Tr U \sqrt{\rho'} \sqrt{\rho}\\
  &= \Tr \rho + \Tr \rho' - 2 f,
\end{aligned}\end{equation}
where $f=\norm{\sqrt\rho\sqrt{\rho'}}_1$. We will take $U$ to be the optimal unitary above, so that $\Ree \Tr U \sqrt{\rho'} \sqrt{\rho} = f$. Notice now that if we take the operator $\sqrt{\rho} - U \sqrt{\rho'}$ and rescale the second term by a factor of $g = f + \sqrt{(1-\Tr \rho)(1- \Tr \rho')}$,  
then the squared operator
\begin{equation}\begin{aligned}
  Q &\coloneqq \left( \sqrt{\rho} - g\, U \sqrt{\rho'} \right)^\dagger \left( \sqrt{\rho} - g \,U \sqrt{\rho'} \right) = \rho + g^2 \rho' - 2 g \Ree\!\left(\sqrt{\rho} U \sqrt{\rho'} \right)
\end{aligned}\end{equation}
satisfies
\begin{equation}\begin{aligned}
  \Tr Q &= \Tr \rho + g^2 \Tr \rho' - 2 f g\\
  &= 1 - g^2 - \left(\sqrt{1-\Tr\rho} - g \sqrt{1-\Tr\rho'}\right)^2\\
  &\leq 1 - g^2.
\end{aligned}\end{equation}
Let us then take
\begin{equation}\begin{aligned}
  R \coloneqq \rho - Q = \Ree\!\left( 2 g \sqrt{\rho} U \sqrt{\rho'} - g^2 \rho'\right),
\end{aligned}\end{equation}
which satisfies $R \leq \rho$ since $Q \geq 0$ by definition. 

Let now $L \coloneqq \sqrt{\rho'}\sigma^{-1/2}$, with the inverse taken on the support of $\sigma$. By assumption, this operator satisfies
\begin{equation}\begin{aligned}
  \norm{L}_{\infty}^2 = \norm{\sigma^{-1/2} \rho' \sigma^{-1/2}}_{\infty} =  2^{D_{\max}(\rho'\|\sigma)} \leq \lambda.
\end{aligned}\end{equation}
Furthermore, 
\begin{equation}
 R=\Ree\!\left[(2g\sqrt\rho U-g^2\sqrt{\rho'})L\sqrt\sigma\right],
\end{equation}
which we observe to form a feasible solution for the variational expression in Lemma~\ref{lem:variational}.
Thus
\begin{align}
 \norm R_\sigma^2 &\leq \norm{\left(2g \sqrt\rho U - g^2 \sqrt{\rho'} \right) L\,}_{2}^2 \nonumber\\
 &\leq\lambda\norm{2g\sqrt\rho U-g^2\sqrt{\rho'}}_2^2 \nonumber\\
 &=\lambda\left(4g^2\Tr\rho-4g^3f+g^4\Tr\rho'\right)\\
 &=\lambda \left( 4 g^2 - 4 g^4 + g^4 - g^2 \left[ 2 \sqrt{1-\Tr\rho} - g \sqrt{1-\Tr\rho'} \right]^2 \right) \nonumber\\
 &\leq\lambda g^2(4-3g^2). \nonumber
\end{align}
For a classical--quantum $\rho$, choosing blockwise polar unitaries in the construction also makes $R$ classical--quantum, as needed for hashing.
\end{proof}

\begin{proof}[Proof of Proposition~\ref{prop:entropycomparison}]
Choose a state $\rho'_{XE}$ with $P(\rho,\rho')\leq\eps$ and a normalised state $\sigma_E$ such that $\rho'_{XE}\leq\lambda\id_X\otimes\sigma_E$. For simplicity and without loss of generality, we can restrict ourselves to full-rank $\sigma_E$: any singular $\sigma_E$ can be approximated by a full-rank state at the expense of increasing $\lambda$ slightly, and the result then follows by a limiting argument. Now, we may dephase $\rho'$ on $X$: this preserves the operator inequality and cannot increase purified distance from the classical--quantum $\rho$ due to the data processing inequality. Apply Lemma~\ref{lem:construction} with $\sigma=\id_X\otimes\sigma_E$. Since $g^2\in[1-\eps^2,1]$, maximising $x(4-3x)$ on this interval gives $\kappa(\eps)$: the maximum is attained at $x=1-\eps^2$ if $\eps^2\leq1/3$, and at $x=2/3$ otherwise. Thus the resulting $R$ is feasible in equation~\eqref{eq:hermitiand2} with $\eta=\eps^2$, and
\begin{equation}
 D_2^{\eps^2,\,\mathbb{M}}(\rho_{XE}\|\id_X\otimes\sigma_E)
 \leq\log_2\lambda+\log_2\kappa(\eps).
\end{equation}
Taking the infimum over all $\rho'_{XE}$ and all $\sigma_E$ yields equation~\eqref{eq:appendixentropycomparison} by the definition of $\hmin^\eps$.
\end{proof}

\subsection{Proof of Theorem~\ref{thm:pa}}\label{subsec:theorem1}

We recall the full statement of the theorem.

{
\renewcommand{\thetheorem}{\ref{thm:pa}}
\begin{theorem}
Let $\rho_{XE}$ be a finite-dimensional subnormalised classical--quantum state and let $0\leq\eps<\sqrt{\Tr\rho_{XE}}$. For any $2^*$-universal family of hash functions,
\begin{equation}
 \Delta(\rho,\mathcal H)
 \leq \eps^2+\frac12\sqrt{\big(|\mathcal Z|-1\big)\,\kappa(\eps)\,
             2^{-\hmin^\eps(X|E)_\rho}},
\end{equation}
where $\kappa(\eps)$ is as in~\eqref{eq:kappa_appendix}.
\end{theorem}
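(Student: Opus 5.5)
The plan is to combine the measured leftover hash lemma~\eqref{eq:measuredlhl} from~\cite[Theorem~14]{rt2026} with the comparison of Proposition~\ref{prop:entropycomparison}, choosing the measured smoothing parameter as $\eta=\eps^2$. The first step is to check that this choice is admissible. Equation~\eqref{eq:measuredlhl} requires $0\leq\eta<\Tr\rho$, and since $0\leq\eps<\sqrt{\Tr\rho_{XE}}$ we have $\eps^2<\Tr\rho_{XE}$. Hence, for every normalised $\sigma_E$ and every $2^*$-universal family,
\begin{equation*}
 \Delta(\rho,\mathcal H)\leq\eps^2+\frac12\sqrt{(|\mathcal Z|-1)\,2^{D_2^{\eps^2,\mathbb M}(\rho_{XE}\|\id_X\otimes\sigma_E)}}.
\end{equation*}
Here I use the remark in Appendix~\ref{app:measured} that the results of~\cite{rt2026} do not rely on normalisation of $\rho$, so the bound is valid for the subnormalised accepted branch.

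Second, the left-hand side does not depend on $\sigma_E$. We may therefore take the infimum over $\sigma_E\in\mathcal S(E)$ on the right. Because $t\mapsto\sqrt{(|\mathcal Z|-1)2^t}$ is monotone and continuous, the infimum passes inside, which produces the quantity $\inf_{\sigma_E}D_2^{\eps^2,\mathbb M}(\rho_{XE}\|\id_X\otimes\sigma_E)$. Proposition~\ref{prop:entropycomparison} bounds this by $-\hmin^\eps(X|E)_\rho+\log_2\kappa(\eps)$. Substituting gives exactly~\eqref{eq:newlhl}.

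Two details need care. If the infimum in the Proposition is not attained, I would argue with an arbitrary $\delta>0$: choose $\sigma_E$ within $\delta$ of the infimum, obtain the bound with an extra factor $2^\delta$, and let $\delta\to0$. The case of singular $\sigma_E$ is already handled by the seminorm convention and the limiting argument inside the Proposition's proof.

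The remaining claims about $\kappa$ are elementary. On $[0,1/\sqrt3]$, $\kappa(\eps)=(1-\eps^2)(1+3\eps^2)=1+2\eps^2-3\eps^4$. Writing $x=\eps^2\in[0,1/3]$, this is increasing in $x$, from $1$ at $x=0$ to $4/3$ at $x=1/3$. It therefore lies in $[1,4/3]$ and is continuous with the constant branch $4/3$, and its expansion is $1+2\eps^2+O(\eps^4)$.

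I do not expect a genuine obstacle here, since the substantive work is in Lemma~\ref{lem:construction} and Proposition~\ref{prop:entropycomparison}. The main point to verify is that the hypotheses of~\cite[Theorem~14]{rt2026} as used in~\eqref{eq:measuredlhl} are met in this setting: a finite-dimensional subnormalised classical--quantum $\rho$, a seed independent of $XE$, exact collision probability $1/|\mathcal Z|$, and the same measured divergence convention as in~\eqref{eq:measuredd2}.
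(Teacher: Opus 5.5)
Your proposal is correct and follows the paper's route: combine the measured leftover hash lemma of \cite[Theorem~14]{rt2026} at $\eta=\eps^2$ with Proposition~\ref{prop:entropycomparison}. The paper's proof differs only in writing out the hashing step explicitly (triangle inequality, $\norm{A}_1\leq\sqrt{\Tr\tau}\,\norm{A}_\tau$, and the collision matrix $G$, whose off-diagonal entries vanish under $2^*$-universality) and then inserting the classical--quantum $R$ from Lemma~\ref{lem:construction} directly. That explicit derivation settles the subnormalisation and classical--quantum points you flag as needing checking, and it also prepares the extension to almost 2-universal families in Proposition~\ref{prop:ordinary}.
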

}

The proof follows directly by combining the measured smooth leftover hash lemma of~\cite[Theorem~14]{rt2026} with the improved purified distance bound in Proposition~\ref{prop:entropycomparison}. For completeness, below we go through the hashing argument of~\cite[Lemma~12 and Theorem~14]{rt2026} in detail, which will help us with generalising the result beyond 2*-universal hashes.

\begin{proof}
Recall that we are interested in bounding
\begin{equation}
 \Delta(\rho,\mathcal H) = \mathbb E_S\left[\frac12\norm{\omega_{ZE}^{S}-\pi_Z\otimes\rho_E}_1\right]
 \label{eq:secrecy_app}
\end{equation}
for a given hash family $\mathcal{H}$ with seed $S$, where $\omega_{ZE}^{S}=\sum_x|h_S(x)\rangle\langle h_S(x)|\otimes \rho_x$. 
Let $R_{XE}=\sum_x|x\rangle\langle x|\otimes R_x$ be a Hermitian operator such that $R_{XE} \leq \rho_{XE}$ and $\sigma_E$ be a normalised state, assumed for simplicity to be strictly positive. Define
\begin{equation}
 R_{ZE}^{S}=\sum_x|h_S(x)\rangle\langle h_S(x)|\otimes R_x,
 \qquad R_E=\sum_xR_x.
\end{equation}
For any Hermitian $A$ and positive $\tau$, the Cauchy--Schwarz inequality can be used to bound~\cite{temme_2010,rt2026}
\begin{equation}
 \norm A_1 \leq\sqrt{\Tr\tau}\,\norm A_\tau,
\end{equation}
so taking $\tau=\id_Z\otimes\sigma_E$ and averaging over $S$ gives
\begin{equation}\begin{aligned}
 \mathbb E_S\frac12\norm{R_{ZE}^{S}-\pi_Z\otimes R_E}_1
 &\leq\frac12 \sqrt{|\mathcal Z|} \, \mathbb E_S 
       \norm{R_{ZE}^{S}-\pi_Z\otimes R_E}_{\id_Z\otimes\sigma_E}\\
 &\leq\frac12 \sqrt{|\mathcal Z| \, \mathbb E_S 
       \norm{R_{ZE}^{S}-\pi_Z\otimes R_E}_{\id_Z\otimes\sigma_E}^2}
\end{aligned}\end{equation}
by Jensen's inequality. The reason why we take the expectation of the squared term here is that it allows us to expand the expression as 
\begin{align}
 \mathbb E_S\norm{R_{ZE}^{S}-\pi_Z\otimes R_E}_{\id_Z\otimes\sigma_E}^2
 &=\sum_{x,x'}G_{xx'}\langle R_x,R_{x'}\rangle_{\sigma_E}
 \label{eq:collisionexpand}
\end{align}
using the Bures inner product $\langle A,B\rangle_{\sigma_E}=\Tr[A\mathcal J_{\sigma_E}^{-1}(B)]$ and the  real symmetric collision matrix $G$ defined as
\begin{equation}\label{eq:Gmatrix}
 G_{xx'} \coloneqq \Pr_S[h_S(x)=h_S(x')]-\frac1{|\mathcal Z|}.
\end{equation}
We then combine the above with the triangle inequality and the data processing inequality for trace distance to bound
\begin{equation}\begin{aligned}
 \Delta(\rho,\mathcal H) &=\frac12 \mathbb E_S\norm{\omega_{ZE}^{S}-\pi_Z\otimes\rho_E}_1\\
 &\leq \frac12 \mathbb E_S \norm{ \omega_{ZE}^S - R_{ZE}^S}_1 + \frac12 \mathbb E_S \norm{ \pi_Z\otimes \rho_E - \pi_Z\otimes R_E}_1 + \frac12 \mathbb E_S \norm{ R_{ZE}^S - \pi_Z\otimes R_E}_1  \\
 &\leq \mathbb E_S \norm{ \omega_{ZE}^S - R_{ZE}^S}_1 + \frac12 \mathbb E_S \norm{ R_{ZE}^S - \pi_Z\otimes R_E}_1  \\
 &\leq \norm{ \rho_{XE} - R_{XE}}_1 + \frac12 \sqrt{|\mathcal Z|\, \mathbb E_S\norm{R_{ZE}^{S}-\pi_Z\otimes R_E}_{\id_Z\otimes\sigma_E}^2}  \\
 &= \Tr(\rho-R) + \frac12 \sqrt{|\mathcal Z| \,\sum_{x,x'}G_{xx'}\langle R_x,R_{x'}\rangle_{\sigma_E}}.
 \label{eq:generalhash}
\end{aligned}\end{equation}
In particular, under the assumption of 2*-universality (equation~\eqref{eq:exactcollision}) all off-diagonal entries of $G$ vanish, so
\begin{equation}\begin{aligned}
 \sum_{x,x'}G_{xx'}\langle R_x,R_{x'}\rangle_{\sigma_E} &= \left(1-\frac{1}{|\mathcal Z|}\right) \sum_x \norm{R_x}_{\sigma_E}^2\\
 &= \left(1-\frac{1}{|\mathcal Z|}\right) \norm{R_{XE}}_{I_X \otimes \sigma_E}^2,
\end{aligned}\end{equation}
which gives
\begin{equation}
 \Delta(\rho,\mathcal H)\leq\Tr(\rho-R)
 +\frac12\sqrt{|\mathcal Z|-1}\,\norm R_{\id_X\otimes\sigma_E}.
 \label{eq:hermitianlhl}
\end{equation}
This expression, upon a minimisation over $R$ and $\sigma_E$, gives the statement of the measured smooth leftover hash lemma of~\cite{rt2026}. Our improvement over the relaxed bound of~\cite[Corollary~15]{rt2026} then follows as a result of Proposition~\ref{prop:entropycomparison}. 
More precisely, choose a classical--quantum smoothing state $\rho'_{XE}$ with $P(\rho,\rho')\leq\eps$ and $\rho'_{XE}\leq\lambda\id_X\otimes\sigma_E$.  Lemma~\ref{lem:construction} then provides a classical--quantum operator $R\leq\rho$ satisfying
\begin{equation}
 \Tr(\rho-R)\leq\eps^2,\qquad
 \norm R_{\id_X\otimes\sigma_E}^2\leq\lambda\,\kappa(\eps).
 \label{eq:almostapprox}
\end{equation}
Recalling that the infimum of $\lambda$ over all such $\rho'_{XE}$ and $\sigma_E$ is precisely $2^{-\hmin^\eps(X|E)}$, plug this into Eq.~\eqref{eq:hermitianlhl} and voil\`a.
\end{proof}

\subsection{Almost 2-universal families}\label{app:hash}

We now extend the common argument beyond exact collisions. For $\mu\geq0$, assume
\begin{equation}
 \Pr_S[h_S(x)=h_S(x')]\leq\frac{1+\mu}{|\mathcal Z|}
 \qquad(x\ne x').
 \label{eq:almostcollision}
\end{equation}
The case $\mu=0$ is ordinary 2-universality.

\begin{proposition}[Almost 2-universal hashing]
\label{prop:ordinary}
Let $\rho_{XE}$ be a finite-dimensional subnormalised classical--quantum state and let $\eps \in [0, \sqrt{\Tr\rho_{XE}})$. For any family of hash functions satisfying~\eqref{eq:almostcollision}, we have
\begin{equation}\begin{aligned}
 \Delta(\rho,\mathcal H) &\leq \eps^2 +\frac12\sqrt{
 2\bigl[|\mathcal Z|-1+\mu(|\mathcal X|-1)\bigr]
 \,2^{\inf_{\sigma_E\in\mathcal S(E)} D^{\eps^2,\mathbb{M}}_2(\rho_{XE} \| \id_X \otimes \sigma_E)}}\\
 &\leq\eps^2+\frac12\sqrt{
 2\bigl[|\mathcal Z|-1+\mu(|\mathcal X|-1)\bigr]
 \kappa(\eps)\,2^{-\hmin^\eps(X|E)_\rho}}.
 \label{eq:ordinary}
\end{aligned}\end{equation}
Here $\kappa(\eps) \leq \frac43$ is again the function as in~\eqref{eq:kappa_appendix}.
\end{proposition}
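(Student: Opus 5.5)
The plan is to reuse the first half of the proof of Theorem~\ref{thm:pa} verbatim and change only the step that controls the collision sum. Up to and including~\eqref{eq:generalhash}, nothing used $2^*$-universality. So for any classical--quantum Hermitian $R_{XE}\leq\rho_{XE}$ and any full-rank $\sigma_E$ we already have
\[
\Delta(\rho,\mathcal H)\leq\Tr(\rho-R)+\frac12\sqrt{|\mathcal Z|\,\Tr(GM)},
\qquad M_{xx'}\coloneqq\langle R_x,R_{x'}\rangle_{\sigma_E}.
\]
Here $M$ is a positive semidefinite Gram matrix in the Bures inner product, with $\Tr M=\norm{R}_{\id_X\otimes\sigma_E}^2$.

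The whole task is then the matrix inequality
\[
\Tr(GM)\leq\frac{2\bigl[|\mathcal Z|-1+\mu(|\mathcal X|-1)\bigr]}{|\mathcal Z|}\,\Tr M
\qquad\text{for all } M\geq0 .
\]
This is equivalent to bounding the largest eigenvalue of the relevant part of $G$.

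Given this inequality, the first line of~\eqref{eq:ordinary} follows by minimising over $R$ and $\sigma_E$ using~\eqref{eq:hermitiand2} with $\eta=\eps^2$. The second line then follows by inserting Proposition~\ref{prop:entropycomparison}, or equivalently the explicit $R$ from Lemma~\ref{lem:construction} with $\Tr(\rho-R)\leq\eps^2$ and $\norm R^2\leq\lambda\kappa(\eps)$. The classical--quantum structure of $R$ guaranteed there is what allows it to be hashed blockwise.

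For the matrix inequality I would first write $G=P-\frac1{|\mathcal Z|}\allones$. Here $P_{xx'}=\Pr_S[h_S(x)=h_S(x')]=\mathbb E_S\sum_z 1[h_S(x)=z]\,1[h_S(x')=z]$ is positive semidefinite, with unit diagonal and off-diagonal entries in $[0,(1+\mu)/|\mathcal Z|]$. Since $\Tr(\allones M)=\norm{R_E}_{\sigma_E}^2\geq0$, it would suffice to control $P$, but $\lambda_{\max}(P)$ can be of order $|\mathcal X|/|\mathcal Z|$, which is far too large. The actual bound has to exploit the subtraction of $\allones/|\mathcal Z|$.

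I would therefore decompose
\[
G=\Bigl(1-\tfrac{1+\mu}{|\mathcal Z|}\Bigr)\id+\tfrac{\mu}{|\mathcal Z|}\allones+N,
\qquad N\coloneqq P-\id-\tfrac{1+\mu}{|\mathcal Z|}(\allones-\id).
\]
Here $N$ is zero on the diagonal and entrywise nonpositive, with entries in $[-(1+\mu)/|\mathcal Z|,0]$.

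The $\allones$ term is where the factor $\mu(|\mathcal X|-1)$ arises. I would bound it via $\lambda_{\max}(\allones-\id)=|\mathcal X|-1$ after symmetrising, rather than crudely by $|\mathcal X|$. The remaining nonpositive hollow part $N$ must be bounded using positivity of $P$. The factor $2$ suggests a Cauchy--Schwarz or ``$(a-b)^2\leq2a^2+2b^2$'' style step: for instance, bounding $\norm{R^S_{ZE}-\pi_Z\otimes R_E}^2\leq2\norm{\cdot}^2$ applied to a suitable split, or bounding $|M_{xx'}|\leq\frac12(M_{xx}+M_{x'x'})$ on the entries where $N$ pairs with a negative Gram entry.

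The main obstacle is exactly this last point. Because $R$ is only Hermitian, not positive, the off-diagonal entries $M_{xx'}$ can have either sign. A purely entrywise argument therefore does not work, and the sign-indefinite part must be absorbed spectrally.

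My first attempt would be to prove $\lambda_{\max}(G)\leq 2\bigl[|\mathcal Z|-1+\mu(|\mathcal X|-1)\bigr]/|\mathcal Z|$ directly. I would do this by writing $G=\mathbb E_S\bigl[\Pi_S-\allones/|\mathcal Z|\bigr]$, where $\Pi_S$ is the block-diagonal collision projector structure of $h_S$. I would then use $\Tr G=|\mathcal X|(1-1/|\mathcal Z|)$ together with $\Tr G^2$, both computable from the collision bound, to control the top eigenvalue. If the spectral route proves too lossy, the fallback is to accept the factor $2$ from splitting $M$ into contributions that pair with the positive and negative parts of $G$ and treating each separately.
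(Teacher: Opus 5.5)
Your reduction is sound and is the one the paper uses. The general bound $\Delta\le\Tr(\rho-R)+\frac12\sqrt{|\mathcal Z|\Tr(GM)}$ needs no universality assumption, and everything hinges on $\lambda_{\max}(G)\le 2\gamma$ with $\gamma\coloneqq[|\mathcal Z|-1+\mu(|\mathcal X|-1)]/|\mathcal Z|$. Your $N$ is exactly minus the hollow, entrywise nonnegative matrix the paper works with.

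The gap is that this eigenvalue bound is never proved, and the routes you propose cannot prove it.

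\begin{itemize}
\item \emph{Traces.} Any estimate of $\lambda_{\max}(G)$ from $\Tr G$ and $\Tr G^2$ loses a factor that grows with $|\mathcal X|/|\mathcal Z|$. Take the $2$-universal family $h_a(x_1,x_2)=x_1+ax_2$ over $\mathbb F_q$, so $\mu=0$, $|\mathcal X|=q^2$, $|\mathcal Z|=q$. Then $G=\bigoplus_{x_2}(\id_q-\allones_q/q)$ and $\lambda_{\max}(G)=1$. Yet a positive matrix with the same $\Tr G=\Tr G^2=q(q-1)$ can have top eigenvalue of order $\sqrt q$.
\item \emph{The split $(a-b)^2\le 2a^2+2b^2$.} This brings back $2\Tr(PM)+2\Tr(\allones M)/|\mathcal Z|$, which is of order $|\mathcal X|/|\mathcal Z|$ even at $\mu=0$.
\item \emph{Entrywise splittings.} Using $|M_{xx'}|\le\sqrt{M_{xx}M_{x'x'}}$, including your fallback of pairing $M$ with positive and negative entries of $G$, leaves an off-diagonal term of order $(|\mathcal X|-1)/|\mathcal Z|$, again even at $\mu=0$.
\end{itemize}
These bounds are useless in the QKD regime $|\mathcal X|=2^n\gg|\mathcal Z|=2^\ell$.

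Positivity of $P$ is also too weak an input. Write $D\coloneqq(1-\frac{1+\mu}{|\mathcal Z|})\id+\frac{\mu}{|\mathcal Z|}\allones$, so that $G=D+N$. Then $P\ge0$ only gives $-N\le D+\allones/|\mathcal Z|$, whose norm is again of order $|\mathcal X|/|\mathcal Z|$.

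The missing idea has two parts.

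First, use $G\ge 0$ itself. Your own formula $G=\mathbb E_S[\Pi_S-\allones/|\mathcal Z|]$ delivers it, since for any vector $v$,
$v^\dagger(\Pi_S-\allones/|\mathcal Z|)v=\sum_z\bigl|\sum_x(\delta_{h_S(x),z}-1/|\mathcal Z|)v_x\bigr|^2\ge0.$
Combined with $\allones\le|\mathcal X|\id$, the condition $G\ge0$ reads $-N\le D\le\gamma\,\id$.

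Second, $-N$ is symmetric and entrywise nonnegative. By Perron--Frobenius its spectral radius equals its largest eigenvalue, so all its eigenvalues lie in $[-\gamma,\gamma]$. In particular $N\le\gamma\,\id$, and hence $G=D+N\le 2\gamma\,\id$.

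So the factor $2$ comes not from a Cauchy--Schwarz split but from this spectral reflection: the lower bound on $N$ supplied by $G\ge0$ becomes an upper bound of the same size. With this step inserted, the rest of your plan (optimising over $R$ and $\sigma_E$, then invoking Proposition~\ref{prop:entropycomparison}) goes through as in the paper.
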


\begin{proof}
Recall from the argument in Section~\ref{subsec:theorem1} that
\begin{equation}\begin{aligned}
 \Delta(\rho,\mathcal H) &\leq \Tr(\rho-R)+ \frac12 \sqrt{ |\mathcal Z| \,\sum_{x,x'}G_{xx'}\langle R_x,R_{x'}\rangle_{\sigma_E}}
\end{aligned}\end{equation}
with the collision matrix $G$ defined by $ G_{xx'} = \Pr_S[h_S(x)=h_S(x')]-\frac1{|\mathcal Z|}$. Notice that, for any vector $|c\rangle = (c_x)_x$, we can expand
\begin{equation}\begin{aligned}
    \langle c|G|c\rangle =  \sum_{x,x'} c_x^* c_{x'} \mathbb{E}_S \left(\delta_{h_S(x),h_S(x')} - \frac{1}{|\mathcal Z|}\right) = \mathbb{E}_S \sum_z \left|\sum_x \left(\delta_{f(x),z}-\frac1{|\mathcal Z|}\right) c_x \right|^2 \geq 0,
\end{aligned}\end{equation}
which shows that $G \geq 0$. This then gives the bound
\begin{equation}\begin{aligned}
\sum_{x,x'}G_{xx'}\langle R_x,R_{x'}\rangle_{\sigma_E} \leq \norm G_\infty \sum_{x}\langle R_x,R_{x}\rangle_{\sigma_E} = \norm G_\infty \norm{R_{XE}}_{I_X \otimes \sigma_E}^2
 \label{eq:generalhash_repeat}
\end{aligned}\end{equation}
that applies to any family of hash functions. 
Crucially, in the case at hand, the assumption of $\mu$-almost 2-universality in equation~\eqref{eq:almostcollision} bounds the off-diagonal entries of $G$ by $\mu/|\mathcal Z|$, which will allow us to estimate its operator norm. 
To see this, put
\begin{equation}
 B=\left(1-\frac{1+\mu}{|\mathcal Z|}\right)\id
       +\frac\mu{|\mathcal Z|}\allones-G,
\end{equation}
where $\allones$ stands for the all-ones matrix. 
The matrix $B$ is symmetric, has zero diagonal, and is entrywise nonnegative. Moreover, $G\geq0$ and $\allones\leq|\mathcal X|\id$ imply
\begin{equation}
 B\leq\left(1-\frac{1+\mu}{|\mathcal Z|}\right)\id
       +\frac\mu{|\mathcal Z|}\allones
 \leq\frac{|\mathcal Z|-1+\mu(|\mathcal X|-1)}{|\mathcal Z|}\id.
\end{equation}
For a symmetric entrywise nonnegative matrix, the Perron--Frobenius theorem states that its spectral radius is its largest eigenvalue. Thus the same coefficient bounds the absolute value of every eigenvalue of $B$. Substituting the resulting lower bound on $B$ into the expression for $G$ gives
\begin{equation}
 0\leq G\leq
 \frac{2\bigl[|\mathcal Z|-1+\mu(|\mathcal X|-1)\bigr]}{|\mathcal Z|}\id.
 \label{eq:almostmatrix}
\end{equation}

Putting everything together, insert equation~\eqref{eq:almostmatrix} into the general hashing bound in~\eqref{eq:generalhash_repeat}, then invoke Proposition~\ref{prop:entropycomparison} to obtain the bound in terms of $2^{-\hmin^\eps(X|E)_\rho}$.
\end{proof}

\begin{remark}[Ordinary 2-universality]
For $\mu=0$, Proposition~\ref{prop:ordinary} gives
\begin{equation}
 \Delta(\rho,\mathcal H)
 \leq\eps^2+\frac12\sqrt{2\,(|\mathcal Z|-1)\,\kappa(\eps)\,
                         2^{-\hmin^\eps(X|E)_\rho}}.
\end{equation}
Thus Theorem~\ref{thm:qkd} remains valid with the privacy amplification term under the square root multiplied by two. At fixed entropy, smoothing parameter, and secrecy target, the resulting certified key length is at most one bit shorter than that certified by Theorem~\ref{thm:pa}.
\end{remark}

\begin{remark}[Constant collision probability]
If $\Pr_S[h_S(x)=h_S(x')]=c\leq1/|\mathcal Z|$ for every $x\ne x'$, then the sharper bound
\begin{equation}
 \Delta(\rho,\mathcal H)
 \leq\eps^2+\frac12\sqrt{|\mathcal Z|\,(1-c)\,\kappa(\eps)\,
                         2^{-\hmin^\eps(X|E)_\rho}}
\end{equation}
holds, since $G=(1-c)\id+(c-1/|\mathcal Z|)\allones\leq(1-c)\id$. In particular, $c=1/|\mathcal Z|$ recovers Theorem~\ref{thm:pa}. In the QKD bound, the coefficient $2^\ell-1$ is replaced by $2^\ell(1-c)\leq2^\ell$.
\end{remark}
\section{Proof of Theorem~\ref{thm:qkd}}\label{app:qkdproof}

In this appendix we prove Theorem~\ref{thm:qkd}. We follow the security proof of~\cite{tl2017}, to which we refer for the formal description of the protocol, and only reproduce the steps where our argument departs from it. The notation used in this appendix is summarised in Table~\ref{tab:notation}.

{
\renewcommand{\thetheorem}{\ref{thm:qkd}}
\begin{theorem}[Finite-key security]
Fix the round counts $m,k,n$ with $m=k+n$ and the test threshold $\Qtol$. Under the protocol assumptions of~\cite{tl2017}, specialised to ideal BB84 measurements and with a $2^*$-universal family for privacy amplification, an $\eps_{\mathrm{tot}}$-secure final key of length $\ell$ can be extracted whenever
\begin{equation}
 \inf\left\{\eps_{\mathrm{cor}}(t)+\eps_\pe(\nu)+
 \sqrt{\frac{2^{\ell-H_*(\nu,t)}}{3}}\;:\;0<\nu<\frac12-\Qtol,\ t\in\mathbb N\right\}
 \leq\eps_{\mathrm{tot}},
\end{equation}
with $H_*(\nu,t)=n[1-h(\Qtol+\nu)]-\leakEC-t$.
\end{theorem}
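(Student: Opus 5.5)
The plan is to follow the composable security argument of~\cite{tl2017} and replace only its final privacy amplification step. We fix admissible $\nu\in(0,\tfrac12-\Qtol)$ and $t\in\mathbb N$ achieving the infimum up to an arbitrarily small slack. By closedness of the condition, or a limiting argument, it then suffices to show that the real protocol is $\eps_{\mathrm{cor}}(t)+\eps_\pe(\nu)+\sqrt{2^{\ell-H_*(\nu,t)}/3}$-secure.

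Security splits into correctness and secrecy, as in~\cite{tl2017,portmann2022}. \emph{Correctness} is untouched by our changes. The verification step uses a two-universal hash with a $t$-bit tag, so the probability of passing with $\hat X\neq X$ is at most $2^{-t}=\eps_{\mathrm{cor}}(t)$.

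For \emph{secrecy}, we need to bound $\Pr[\pass]\cdot\Delta$ for the normalised accepted state. Equivalently, we bound $\Delta(\rho_{XE\wedge\pass},\mathcal H)$ for the subnormalised state, with $E$ including the transcript, leakage and tag. There are two cases.
\begin{itemize}
\item If $\Pr[\pass]\leq\eps_\pe(\nu)$, the secrecy error is trivially at most $\eps_\pe(\nu)$.
\item Otherwise, the entropy bound \eqref{eq:hstar} gives $\hmin^{\eps}(X|E\wedge\pass)\geq H_*(\nu,t)$ with $\eps=\sqrt{\eps_\pe(\nu)}<\sqrt{\Pr[\pass]}$. This meets the hypothesis of Theorem~\ref{thm:pa}.
\end{itemize}
Justifying \eqref{eq:hstar} is the one step we take from~\cite{tl2017}. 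The accepted state is $\sqrt{\eps_\pe}$-close in generalised purified distance to a state on which the key-round error rate is below $\Qtol+\nu$. Here we use that a subnormalised cut of probability $p$ gives purified distance at most $\sqrt p$, with the exact hypergeometric maximum \eqref{eq:peprob} replacing the Serfling bound. On that state, the uncertainty relation with $c=1/2$ gives $n[1-h(\Qtol+\nu)]$, and the chain rule subtracts $\leakEC+t$.

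Applying Theorem~\ref{thm:pa} with $|\mathcal Z|=2^\ell$ then gives
\begin{equation}
 \Delta\leq\eps_\pe(\nu)+\frac12\sqrt{(2^\ell-1)\kappa\,2^{-H_*(\nu,t)}}\leq\eps_\pe(\nu)+\sqrt{\frac{2^{\ell-H_*(\nu,t)}}{3}},
\end{equation}
using $\kappa\leq4/3$, so that $\tfrac14\cdot\tfrac43=\tfrac13$, and $2^\ell-1\leq2^\ell$. Both cases are therefore bounded by this expression. Adding the correctness term, the total error is at most the quantity in \eqref{eq:qkd}, which is at most $\eps_{\mathrm{tot}}$.

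The main obstacle is checking that the smoothing in \eqref{eq:hstar} is in the generalised purified distance convention required by Theorem~\ref{thm:pa}, for the subnormalised accepted branch. We must also check that the worst-case exact sampling probability can legitimately replace the Serfling bound. For the latter, the argument of~\cite{tl2017} only uses it as an upper bound on the probability of the bad event for any fixed error pattern, which \eqref{eq:peprob} provides.
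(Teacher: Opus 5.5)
Your proposal is correct and follows the paper's proof essentially step for step. The shared steps are correctness via the $t$-bit verification tag, and a two-case secrecy bound on the subnormalised accepted branch: the bound is trivial when $\Pr[\pass]\leq\eps_\pe(\nu)$, and otherwise \eqref{eq:hstar} from~\cite{tl2017}, with the exact hypergeometric tail replacing Serfling, is fed into Theorem~\ref{thm:pa} using $\kappa\leq 4/3$ and $2^\ell-1\leq2^\ell$.

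The one point to tighten is your ``limiting argument''. Because $\nu$ is only an analysis parameter, the secrecy bound holds for every $\nu$ at once, so you can take the infimum over $\nu$ directly. By contrast, $t$ fixes the protocol, so a limit across different $t$ is not enough; the paper instead notes that the infimum over $t\in\mathbb N$ is attained, since the square-root term diverges as $t\to\infty$.
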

}

\begin{table}[!htbp]
 \centering
 \small
 \begin{tabular}{ll}
  \toprule
  Symbol & Meaning \\
  \midrule
  $\Pi$ & Uniformly random $k$-subset of the $m$ rounds used for testing \\
  $Z\in\{0,1\}^m$ & Disagreements between Alice's and Bob's outcomes over all $m$ rounds \\
  $\pe$ & Event that the parameter estimation test passes \\
  $\pass$ & Event that parameter estimation and verification both pass \\
  $\Omega$ & Event that the key rounds contain at least $n(\Qtol+\nu)$ errors \\
  $\rho_{\wedge A}$ & Restriction of $\rho$ to the event $A$, with trace $\Pr[A]_\rho$ \\
  $\hmin(X|E\wedge A)_\rho$ & Min-entropy evaluated on $\rho_{\wedge A}$ \\
  $\sigma$ & State after measurement and parameter estimation \\
  $\sigma'$ & State after verification, before privacy amplification \\
  $E$ & Adversary's system with all public seeds and transcripts except $S$ \\
  $S$ & Privacy amplification seed \\
  $K$ & Alice's final key $h_S(X)$ \\
  $\eps_\pe(\nu)$ & Worst-case sampling failure probability, equation~\eqref{eq:peprob} \\
  $H_*(\nu,t)$ & $n[1-h(\Qtol+\nu)]-\leakEC-t$, equation~\eqref{eq:hstar} \\
  \bottomrule
 \end{tabular}
 \caption{\textbf{Notation used in Appendix~\ref{app:qkdproof}.}}
 \label{tab:notation}
\end{table}

\begin{proposition}\label{prop:sampling}
Let $Z=(Z_1,\ldots,Z_m)$ be binary random variables and let $\Pi$ be a uniformly random $k$-subset of $[m]$, independent of $Z$. Then
\begin{equation}
 \Pr\left[\frac1k\sum_{i\in\Pi}Z_i<\Qtol\ \text{and}\ \frac1n\sum_{i\notin\Pi}Z_i\geq\Qtol+\nu\right]\leq\eps_\pe(\nu).
\end{equation}
\end{proposition}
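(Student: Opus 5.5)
The plan is to condition on the value of $Z$ and reduce to the worst case in the definition~\eqref{eq:peprob}. Since $\Pi$ is independent of $Z$, the law of total probability gives
\begin{equation*}
 \Pr\Bigl[\tfrac1k\textstyle\sum_{i\in\Pi}Z_i<\Qtol\ \text{and}\ \tfrac1n\sum_{i\notin\Pi}Z_i\geq\Qtol+\nu\Bigr]
 =\sum_{z\in\{0,1\}^m}\Pr[Z=z]\,
 \Pr_\Pi\Bigl[\tfrac1k\textstyle\sum_{i\in\Pi}z_i<\Qtol\ \text{and}\ \tfrac1n\sum_{i\notin\Pi}z_i\geq\Qtol+\nu\Bigr].
\end{equation*}
Here independence is exactly what allows replacing the conditional law of $\Pi$ given $Z=z$ by the uniform distribution over $k$-subsets. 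Each inner probability is at most the maximum over $z$, which is $\eps_\pe(\nu)$ by~\eqref{eq:peprob}. Since $\sum_z\Pr[Z=z]=1$, the convex combination is bounded by $\eps_\pe(\nu)$.

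The key steps, in order, are as follows.
\begin{enumerate}
\item Write the event as a measurable function of the pair $(Z,\Pi)$.
\item Use the product structure of the joint law of $(Z,\Pi)$ to disintegrate over $z$.
\item Bound each conditional term by the maximum.
\item Sum the bounds.
\end{enumerate}
As a remark, the inner probability depends on $z$ only through its Hamming weight $w$. It equals a hypergeometric tail: the probability that $j$ errors fall in the test set, summed over $j<k\Qtol$ with $w-j\geq n(\Qtol+\nu)$. This is what makes the maximum computable, though it is not needed for the inequality itself.

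The only step needing care is the independence assumption. In the QKD application, Eve's attack fixes the joint distribution of the error pattern before the uniformly random choice of test rounds. In the entanglement-based setting, the measurement basis choices make the pattern $Z$ a random variable whose distribution does not depend on $\Pi$. This is the hypothesis stated in the proposition, so it can simply be invoked. Otherwise the argument is elementary, and I expect no real obstacle.
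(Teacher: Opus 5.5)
Your proof is correct and is essentially the paper's argument: you use the independence of $\Pi$ and $Z$ to expand the probability by the law of total probability over $z\in\{0,1\}^m$, and then bound each conditional term by the maximum that defines $\eps_\pe(\nu)$. Your hypergeometric remark matches what the paper uses separately in Lemma~\ref{lem:exact-tails} to evaluate that maximum; as you note, it is not needed for this inequality.
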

\begin{proof}
By the law of total probability and the independence of $\Pi$ and $Z$, we have
\begin{equation}\begin{aligned}
 &\Pr\left[\frac1k\sum_{i\in\Pi}Z_i<\Qtol\ \text{and}\ \frac1n\sum_{i\notin\Pi}Z_i\geq\Qtol+\nu\right]\\
 &\qquad=\sum_{z\in\{0,1\}^m}\Pr[Z=z]\,\Pr_\Pi\left[\frac1k\sum_{i\in\Pi}z_i<\Qtol\ \text{and}\ \frac1n\sum_{i\notin\Pi}z_i\geq\Qtol+\nu\right]
 \leq\eps_\pe(\nu),
\end{aligned}\end{equation}
where the inequality follows from the definition~\eqref{eq:peprob}.
\end{proof}
This replaces the Serfling bound of~\cite[Lemma~6]{tl2017}; as there, no assumption is made on the distribution of $Z$.

\begin{proposition}\label{prop:hmax}
Let $\sigma$ be the state after measurement and parameter estimation. For any $\nu\in(0,\frac12-\Qtol)$ such that $\eps_\pe(\nu)<\Pr[\pe]_\sigma$, we have
\begin{equation}
 \hmax^{\sqrt{\eps_\pe(\nu)}}(X|Y\wedge\pe)_\sigma\leq n\,h(\Qtol+\nu).
\end{equation}
\end{proposition}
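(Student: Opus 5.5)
The proof follows the smooth max-entropy step of~\cite{tl2017}, with the Serfling estimate replaced by Proposition~\ref{prop:sampling}. The plan is to construct an explicit smoothing state close to $\sigma_{\wedge\pe}$ in purified distance that has no support on the bad event $\Omega$. For that state, the max-entropy of $X$ given Bob's $Y$ can be bounded by counting error patterns.

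\textbf{Step 1: exhibit the smoothing state.} Consider the classical register recording the disagreement pattern $Z$ (or equivalently, condition on Alice's and Bob's key-round strings). Define $\tilde\sigma\coloneqq\sigma_{\wedge(\pe\wedge\neg\Omega)}$, i.e.\ $\sigma_{\wedge\pe}$ with the branch where the key rounds carry at least $n(\Qtol+\nu)$ errors removed. Both the parameter estimation outcome and $\Omega$ are functions of classical data (the subset $\Pi$ and the outcome strings), so this restriction is a projection onto classical values and $\tilde\sigma\leq\sigma_{\wedge\pe}$.

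\textbf{Step 2: bound the distance.} For $\tilde\rho\leq\rho$ with $\rho,\tilde\rho$ subnormalised, the generalised fidelity satisfies $\overline F(\rho,\tilde\rho)\geq\Tr\tilde\rho+\sqrt{(1-\Tr\rho)(1-\Tr\tilde\rho)}\geq1-\Tr(\rho-\tilde\rho)$. This gives $P(\rho,\tilde\rho)\leq\sqrt{\Tr(\rho-\tilde\rho)}\sqrt{2-\Tr(\rho-\tilde\rho)}$, which is not quite $\sqrt{\Tr(\rho-\tilde\rho)}$. The standard sharper fact $P(\rho,\rho_{\wedge A})\leq\sqrt{\Pr[\neg A]_\rho}$ for classical events should be invoked instead, as in~\cite[Lemma~7]{tl2017} or~\cite{tomamichelbook}. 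The purification-based form also gives it directly: the fidelity of $\rho$ with its restriction is $\sqrt{\Tr\tilde\rho\,\Tr\rho}+(1-\Tr\rho)$, so $P^2=\Tr\rho-\Tr\tilde\rho$ up to lower-order terms that favour us. Here $\Tr(\sigma_{\wedge\pe}-\tilde\sigma)=\Pr[\pe\wedge\Omega]_\sigma\leq\eps_\pe(\nu)$ by Proposition~\ref{prop:sampling}, so $P(\sigma_{\wedge\pe},\tilde\sigma)\leq\sqrt{\eps_\pe(\nu)}$. The hypothesis $\eps_\pe(\nu)<\Pr[\pe]_\sigma$ guarantees the smoothing radius is admissible and $\tilde\sigma\neq0$. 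Checking this distance inequality carefully for subnormalised $\sigma_{\wedge\pe}$ is the main obstacle; I would verify it by computing $\overline F$ explicitly, using that $\sigma_{\wedge\pe}$ and $\tilde\sigma$ commute blockwise on the classical registers.

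\textbf{Step 3: count error patterns.} On the support of $\tilde\sigma$, the strings $X$ and $Y$ of length $n$ differ in fewer than $n(\Qtol+\nu)$ positions. Hence $\tilde\sigma_{XY}$ is supported on pairs at Hamming distance below $n(\Qtol+\nu)\leq n/2$. I would use $\hmax(X|Y)\leq\log\max_y|\{x:\tilde\sigma_{xy}>0\}|$, a bound on $\hmax$ by the log of the conditional support size, valid for subnormalised classical states. This gives the bound $\log\sum_{j<n(\Qtol+\nu)}\binom nj\leq n\,h(\Qtol+\nu)$ by the standard binomial estimate, valid since $\Qtol+\nu<1/2$. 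Since $\hmax^{\sqrt{\eps_\pe(\nu)}}$ is an infimum over states within that purified distance, the claim follows.
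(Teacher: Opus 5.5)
Your outline matches the paper's: excise $\Omega$ from the accepted branch, then count strings within Hamming distance $n(\Qtol+\nu)$. The paper cites \cite[Lemma~7]{tl2017} for the first step and \cite[Proposition~8]{tl2017} for the second, and your Step~3 is a correct self-contained version of the latter. The gap is in Step~2. The state you commit to is the bare restriction $\tilde\sigma=\sigma_{XY\wedge(\pe\wedge\neg\Omega)}$, and it is not within purified distance $\sqrt{\eps_\pe(\nu)}$ in general. The ``standard fact'' $P(\rho,\rho_{\wedge A})\le\sqrt{\Pr[\neg A]_\rho}$ is false.

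To see this, put $\rho=\sigma_{XY\wedge\pe}$, $r=\Tr\rho=\Pr[\pe]_\sigma$ and $q=\Pr[\pe\wedge\Omega]_\sigma$. Consider any rescaling $c\,\sigma_{XY\wedge(\pe\wedge\neg\Omega)}$ with $c(r-q)\le1$. Its generalised fidelity with $\rho$ is
\[
 \overline F=\sqrt c\,(r-q)+\sqrt{(1-r)\bigl(1-c(r-q)\bigr)}=u\cdot v_c,\qquad u=\bigl(\sqrt{r-q},\sqrt{1-r}\bigr),\quad v_c=\bigl(\sqrt{c(r-q)},\sqrt{1-c(r-q)}\bigr),
\]
with $\|u\|=\sqrt{1-q}$ and $\|v_c\|=1$. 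Hence $P^2=1-\overline F^2\ge q$, and for $0<q<r$ equality holds only at $c=1/(1-q)$.

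\begin{itemize}
\item Your restriction is $c=1$. At $r=1$ it gives exactly $P^2=2q-q^2$, so your first estimate was tight, not loose.
\item The fidelity in your ``purification-based'' argument, $\sqrt{\Tr\tilde\rho\,\Tr\rho}+(1-\Tr\rho)$, belongs to a different state: the restriction rescaled to trace $r$, i.e.\ $c=r/(r-q)$. That choice is optimal only when $r=1$. For a genuinely subnormalised accepted branch, the correction terms work against you, not in your favour.
\end{itemize}

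The fix is to take $\tilde\sigma=(1-q)^{-1}\sigma_{XY\wedge(\pe\wedge\neg\Omega)}$.
\begin{itemize}
\item It is subnormalised, since $r-q\le1-q$.
\item It satisfies $\Pr[\Omega]_{\tilde\sigma}=0$.
\item It has $\overline F=\sqrt{1-q}$, so $P=\sqrt q\le\sqrt{\eps_\pe(\nu)}$.
\end{itemize}
This is the state that \cite[Lemma~7]{tl2017} supplies. It is not dominated by $\sigma_{XY\wedge\pe}$: it is the conditional of the normalised extension $\rho\oplus(1-r)$ given $\neg\Omega$, so the excised mass is redistributed over the good branch and the missing-trace component. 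You must therefore drop the requirement $\tilde\sigma\le\sigma_{\wedge\pe}$, which smoothing never needed. Your Step~3 then goes through unchanged, because the support is the same and your counting bound only uses $\Tr\tilde\sigma\le1$.

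The constant is not cosmetic. When $\Pr[\pe]_\sigma$ is close to one, the bare restriction only gives a radius of about $\sqrt{2\eps_\pe(\nu)}$. Through Theorem~\ref{thm:pa}, that would double the $\eps_\pe(\nu)$ term in Theorem~\ref{thm:qkd}, which is the very saving the paper is after.

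Finally, when you invoke Proposition~\ref{prop:sampling} you should say why $Z$ is independent of $\Pi$. The paper justifies this by noting that the measurement bases are chosen independently of the test subset.
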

\begin{proof}
Let $\Omega$ be the event that $X_i\neq Y_i$ for at least $n(\Qtol+\nu)$ indices $i\in[n]$, and let $Z\in\{0,1\}^m$ record the rounds in which Alice's and Bob's outcomes disagree. Every round is measured in a basis chosen independently of $\Pi$~\cite[Sec.~3.3]{tl2017}, so $Z$ is independent of $\Pi$, and Proposition~\ref{prop:sampling} gives
\begin{equation}
 \Pr[\pe\wedge\Omega]_\sigma\leq\eps_\pe(\nu).
\end{equation}
Since $\eps_\pe(\nu)<\Pr[\pe]_\sigma$, we can apply~\cite[Lemma~7]{tl2017} to the subnormalised branch $\sigma_{XY\wedge\pe}$ where Alice and Bob pass parameter estimation, to show the existence of a state $\tilde\sigma_{XY}$ with $\Pr[\Omega]_{\tilde\sigma}=0$ and $P(\sigma_{XY\wedge\pe},\tilde\sigma_{XY})\leq\sqrt{\eps_\pe(\nu)}$. Hence
\begin{equation}
 \hmax^{\sqrt{\eps_\pe(\nu)}}(X|Y\wedge\pe)_\sigma\leq\hmax(X|Y)_{\tilde\sigma}\leq n\,h(\Qtol+\nu),
\end{equation}
where the first inequality follows since the smoothing takes an infimum, and the second from the argument in~\cite[Proposition~8, Eqs.~(83)--(89)]{tl2017}, which uses only $\Pr[\Omega]_{\tilde\sigma}=0$ and $\Qtol+\nu\leq\frac12$.
\end{proof}

\begin{proposition}\label{prop:hmin}
Let $\sigma'$ be the state after error correction and verification, and let $\nu\in(0,\frac12-\Qtol)$ satisfy $\eps_\pe(\nu)<\Pr[\pass]$. Then
\begin{equation}
 \hmin^{\sqrt{\eps_\pe(\nu)}}(X|E\wedge\pass)_{\sigma'}\geq H_*(\nu,t),
\end{equation}
where $H_*(\nu,t)=n[1-h(\Qtol+\nu)]-\leakEC-t$.
\end{proposition}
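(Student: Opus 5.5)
The plan follows the Tomamichel--Leverrier argument leading to their Eq.~(102). The route is the entropic uncertainty relation applied to the accepted parameter-estimation branch, followed by chain rules for the classical leakage. The only change is that the smoothing parameter now comes from Proposition~\ref{prop:hmax} rather than from the Serfling bound.

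The first step is to treat the state $\sigma$ after measurement and parameter estimation. Since $\Pr[\pass]\leq\Pr[\pe]$, the hypothesis $\eps_\pe(\nu)<\Pr[\pass]$ implies $\eps_\pe(\nu)<\Pr[\pe]_\sigma$, so Proposition~\ref{prop:hmax} applies. It gives $\hmax^{\sqrt{\eps_\pe(\nu)}}(X|Y\wedge\pe)_\sigma\leq n\,h(\Qtol+\nu)$.

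Next comes the uncertainty relation for smooth entropies~\cite{tomamichel2011eur}, in the form used in~\cite[Sec.~4]{tl2017}. It is applied to the subnormalised branch conditioned on $\pe$, with the key rounds measured in the basis complementary to the hypothetical one. With overlap $c=1/2$ over $n$ rounds this yields
\begin{equation}
 \hmin^{\sqrt{\eps_\pe(\nu)}}(X|E'\wedge\pe)_\sigma\geq n\log_2\frac1c-\hmax^{\sqrt{\eps_\pe(\nu)}}(X|Y\wedge\pe)_\sigma\geq n[1-h(\Qtol+\nu)].
\end{equation}
Here $E'$ collects Eve's quantum system together with the public transcript up to parameter estimation (basis choices, $\Pi$, and the test outcomes). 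This step needs care about which classical registers the EUR tolerates. As in~\cite{tl2017}, the announced test data are independent of the key rounds' complementary-basis outcomes in the virtual protocol, so they can be included in $E'$ without loss. The relation must also hold for subnormalised restricted states with the same smoothing parameter on both sides. I would take this directly from~\cite[Eqs.~(90)--(96)]{tl2017}, since the EUR there is stated for generalised purified distance and arbitrary subnormalised states.

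Then I would account for error correction and verification. The syndrome has at most $\leakEC$ bits and the tag has $t$ bits. By the chain rule for smooth min-entropy with classical side information~\cite[Lemma~11]{tl2017}, $\hmin^{\eps}(X|E'CT)\geq\hmin^\eps(X|E')-\leakEC-t$, which introduces no change in smoothing.

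Finally, I would restrict further from $\pe$ to $\pass$. Conditioning a subnormalised cq-state on an additional classical event (verification passing, which is a function of $X$, Bob's string and the public registers) is a projection onto a classical sub-branch. This projection can only increase the min-entropy of the unsmoothed optimiser. It also does not increase the generalised purified distance, because the projection is a trace-non-increasing CP map and $P$ obeys data processing for such maps. Hence the smooth bound passes to $\sigma'_{\wedge\pass}$ with the same radius $\sqrt{\eps_\pe(\nu)}$.

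I expect the main obstacle to be this last bookkeeping of subnormalised branches. One has to check that smoothing witnesses survive the restriction to $\pass$ and that the ordering of conditioning and leakage is legitimate. To resolve it, I would mirror~\cite[Proposition~9]{tl2017} line by line, substituting $\sqrt{\eps_\pe(\nu)}$ for their Serfling-based smoothing parameter.
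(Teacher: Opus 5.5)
Your proposal is correct and follows the paper's proof step for step. Proposition~\ref{prop:hmax} applies because $\Pr[\pass]\leq\Pr[\pe]_\sigma$; the entropic uncertainty relation of~\cite{tl2017} with $c=1/2$ then gives $n[1-h(\Qtol+\nu)]$ on the $\pe$ branch; the chain rule subtracts $\leakEC+t$; and the restriction to $\pass$ costs nothing, with the radius $\sqrt{\eps_\pe(\nu)}$ kept admissible by $\eps_\pe(\nu)<\Pr[\pass]$. The one point to state more carefully is the last step, which the paper simply delegates to~\cite[Lemma~10]{tl2017}: if the acceptance event lives on a classical register that the smoothing optimiser on $XE$ does not carry, you must first extend the optimiser to that register at the same purified distance and dephase it, and only then project.
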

\begin{proof}
Since $\Pr[\pass]\leq\Pr[\pe]_\sigma$, Proposition~\ref{prop:hmax} applies. Combining it with the entropic uncertainty relation~\cite[Corollary~5]{tl2017} for ideal BB84 measurements, $\overline{c}=\frac12$, and discarding Bob's test string, we have
\begin{equation}
 \hmin^{\sqrt{\eps_\pe(\nu)}}(X|E\wedge\pe)_\sigma\geq n[1-h(\Qtol+\nu)].
\end{equation}
During error correction Alice reveals a syndrome of at most $\leakEC$ bits and a tag of $t$ bits. By the chain rule~\cite[Eq.~(19)]{tl2017} each lowers the smooth min-entropy by at most its length, and restricting further to acceptance of verification cannot lower it~\cite[Lemma~10]{tl2017}; the smoothing radius remains admissible since $\eps_\pe(\nu)<\Pr[\pass]$. This is the chain of inequalities in~\cite[Eqs.~(97)--(102)]{tl2017}, and gives the claim.
\end{proof}

\begin{proof}[Proof of Theorem~\ref{thm:qkd}]
By~\cite[Lemma~1]{tl2017}, it suffices to bound the correctness and secrecy errors separately. By~\cite[Theorem~2]{tl2017}, the verification hash of length $t$ gives correctness error $\eps_{\mathrm{cor}}(t)=2^{-t}$. It therefore remains to show that for every $\nu\in(0,\frac12-\Qtol)$,
\begin{equation}
 \frac12\norm{\omega_{KSE\wedge\pass}-\pi_K\otimes\omega_{SE\wedge\pass}}_1\leq\eps_\pe(\nu)+\sqrt{\frac{2^{\ell-H_*(\nu,t)}}{3}}.
 \label{eq:secrecy-appendix}
\end{equation}

First consider the case $\eps_\pe(\nu)<\Pr[\pass]$. Then $\sqrt{\eps_\pe(\nu)}<\sqrt{\Tr\sigma'_{\wedge\pass}}$, so we can apply Theorem~\ref{thm:pa} to $\sigma'_{XE\wedge\pass}$ with smoothing parameter $\sqrt{\eps_\pe(\nu)}$ and $|\mathcal Z|=2^\ell$. The left-hand side of~\eqref{eq:secrecy-appendix} is exactly $\Delta(\sigma'_{\wedge\pass},\mathcal H)$, since the secrecy quantity includes the seed (Appendix~\ref{app:conventions}). Using Proposition~\ref{prop:hmin} and for simplicity bounding $\kappa(\sqrt{\eps_\pe(\nu)})\leq\frac43$ and $2^\ell-1\leq2^\ell$, we obtain
\begin{equation}
 \Delta(\sigma'_{\wedge\pass},\mathcal H)\leq\eps_\pe(\nu)+\frac12\sqrt{\frac43\,2^{\ell-H_*(\nu,t)}}=\eps_\pe(\nu)+\sqrt{\frac{2^{\ell-H_*(\nu,t)}}{3}}.
\end{equation}

Now consider the case $\Pr[\pass]\leq\eps_\pe(\nu)$. Both operators in~\eqref{eq:secrecy-appendix} have trace $\Pr[\pass]$, so
\begin{equation}
 \frac12\norm{\omega_{KSE\wedge\pass}-\pi_K\otimes\omega_{SE\wedge\pass}}_1
 \leq\frac12\left(\Tr\omega_{KSE\wedge\pass}+\Tr[\pi_K\otimes\omega_{SE\wedge\pass}]\right)
 =\Pr[\pass]\leq\eps_\pe(\nu).
\end{equation}
In neither case is the secrecy bound divided by $\Pr[\pass]$.

Adding $\eps_{\mathrm{cor}}(t)$ and taking the infimum over $\nu$ bounds the total security error for fixed $t$. Since the square-root term diverges as $t\to\infty$, the infimum over $t\in\mathbb N$ in~\eqref{eq:qkd} is attained, and choosing this $t$ as the verification length proves the theorem.
\end{proof}
\section{Exact sampling tails}\label{app:sampling}
The following lemma justifies why evaluating $\eps_{\mathrm{pe}}(\nu)$ numerically can be done in constant queries, instead of searching over all binary error patterns $z \in \{0,1\}^m$.
\begin{lemma}[Exact test and sampling tails]\label{lem:exact-tails}
Let $m=k+n$ with $k,n\geq1$, let $0<\Qtol<\Qtol+\nu<1/2$, and put
\begin{equation}
 a=\lceil k\Qtol\rceil-1,\qquad b=\lceil n(\Qtol+\nu)\rceil.
 \label{eq:appendix-test-count}
\end{equation}

For a uniform $k$-subset $\Pi$, independent of a binary error pattern $z$, the worst-case joint probability in equation~\eqref{eq:peprob} is
\begin{equation}
 \begin{aligned}
 \eps_\pe(\nu)
 &=\Pr[\operatorname{Hyper}(m,a+b,k)\leq a]\\
 \end{aligned}
 \label{eq:hypergeom}
\end{equation}
Here $\operatorname{Hyper}(m,s,k)$ counts sampled errors when $k$ positions are drawn without replacement from $m$ positions containing $s$ errors. The maximum is attained by any pattern with $a+b$ errors.
\end{lemma}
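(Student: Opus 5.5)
The plan is to reduce the maximisation over patterns $z\in\{0,1\}^m$ to a maximisation over the total error count $s=\sum_i z_i$, and then to a one-parameter monotonicity argument. First, since $\Pi$ is a uniformly random $k$-subset and $n=m-k$, the pair of events depends on $z$ only through $s$ and the number $J=\sum_{i\in\Pi}z_i$ of sampled errors, which is distributed as $\operatorname{Hyper}(m,s,k)$. The test-acceptance event $\frac1k J<\Qtol$ is equivalent, since $J$ is an integer, to $J\leq\lceil k\Qtol\rceil-1=a$. The key-round event $\frac1n(s-J)\geq\Qtol+\nu$ is equivalent to $s-J\geq\lceil n(\Qtol+\nu)\rceil=b$. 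Thus for a pattern of weight $s$ the probability in~\eqref{eq:peprob} equals
\begin{equation}
 p(s)\coloneqq\Pr\bigl[J\leq\min(a,s-b)\bigr],\qquad J\sim\operatorname{Hyper}(m,s,k),
\end{equation}
which vanishes for $s<b$. Hence $\eps_\pe(\nu)=\max_{0\leq s\leq m}p(s)$.

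Next I split into the two regimes $b\leq s\leq a+b$ and $s\geq a+b$. For $s\geq a+b$ the threshold is $a$ and $p(s)=\Pr[\operatorname{Hyper}(m,s,k)\leq a]$. This is nonincreasing in $s$, by the standard coupling: adding one more error to the population can only increase the number of sampled errors, so hypergeometric distributions are stochastically increasing in $s$. So on this range the maximum is at $s=a+b$.

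For $b\leq s\leq a+b$ the threshold is $s-b$, and I claim $p(s)\leq p(s+1)$. The cleanest route is again a coupling. Fix the random subset $\Pi$, and obtain the weight-$(s+1)$ pattern from the weight-$s$ one by flipping a uniformly chosen zero position. Then $J$ increases by at most one, while the threshold $s-b$ increases by exactly one. So $J_s\leq s-b$ implies $J_{s+1}\leq J_s+1\leq s+1-b$. The probability of the event therefore cannot decrease. Because $\Pi$ is uniform, the probability depends only on the weight, so the particular pattern chosen is irrelevant, and this justifies the coupling. Combining both regimes gives a maximum at $s=a+b$, with value $\Pr[\operatorname{Hyper}(m,a+b,k)\leq a]$, attained by any pattern with $a+b$ errors.

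I expect the main subtlety to be the edge cases rather than the monotonicity itself. These include $a=-1$ (when $k\Qtol\leq1$... strictly, when $\lceil k\Qtol\rceil=0$, which cannot occur since $\Qtol>0$, so $a\geq0$) and the requirement $a+b\leq m$. If $a+b>m$, the formula should be read with $s$ capped at $m$, and I would note that $a<k$ and $b\leq n$ follow from $\Qtol,\Qtol+\nu<1$, so $a+b\leq m$ holds automatically. I would also check carefully that the integer conversions of the strict inequality $<\Qtol$ and the non-strict inequality $\geq\Qtol+\nu$ produce exactly $a$ and $b$ as defined.
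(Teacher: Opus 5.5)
Your proposal is correct and follows essentially the same route as the paper: you reduce to the weight $s$ via $p(s)=\Pr[J_s\le\min\{a,s-b\}]$ with $J_s\sim\operatorname{Hyper}(m,s,k)$, then use a single-$\Pi$ coupling with $J_s\le J_{s+1}\le J_s+1$ to show $p$ is nondecreasing up to $s=a+b$ and nonincreasing after it. Your explicit checks that $a\ge0$ and $a+b\le m$ are a small addition that the paper omits. One minor slip: the increasing step $p(s)\le p(s+1)$ should be stated for $b\le s<a+b$, as the paper does, because at $s=a+b$ the threshold for $s+1$ is already $a$ rather than $s+1-b$.
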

\begin{proof}
Fix a pattern of weight $s$. Its sampled error count is $T_s\sim\operatorname{Hyper}(m,s,k)$, and the unsampled count is $s-T_s$. Thus the joint probability is
\begin{equation}
 p(s)=\Pr[T_s\leq\min\{a,s-b\}].
\end{equation}
Couple the variables using one uniform $k$-subset $\Pi$ and setting $T_s=|\Pi\cap\{1,\ldots,s\}|$. Then $T_s\leq T_{s+1}\leq T_s+1$ pointwise. For $s<a+b$, the event $T_s\leq s-b$ implies $T_{s+1}\leq s+1-b$, so $p(s)\leq p(s+1)$. For $s\geq a+b$, the threshold is $a$ and the event $T_{s+1}\leq a$ implies $T_s\leq a$, so $p(s+1)\leq p(s)$. The maximum is therefore attained at $s=a+b$, where the hypergeometric tail gives equation~\eqref{eq:hypergeom}. Averaging over patterns independent of $\Pi$ preserves the bound.
\end{proof}

\section{Numerical optimisation}\label{app:numerics}

This appendix specifies the implementation of Section~\ref{sec:methods-numerics}, the code for which is available at~\cite{leftoverscode2026}.

For fixed protocol inputs (the total number of signals $m$, the expected QBER $Q$,
the security and completeness parameters $\eps_{\mathrm{tot}}$ and
$\eps_{\mathrm{com}}$, and the error correction leakage parameters), the certified
key length is
\begin{equation}
 \ell_{\mathrm{new}}=
 \max\left\{0,\left\lfloor H_*(\nu,t)+\log_2 3+
 2\log_2\!\left[\eps_{\mathrm{tot}}-2^{-t}-\eps_\pe(\nu)\right]\right\rfloor\right\},
 \label{eq:numerical-new-length}
\end{equation}
which we maximise over the remaining free parameters $t$, $k$ and $\nu$. Here
$H_*$ and $\eps_\pe$ also depend on $k$, through $n = m-k$ and $Q_{\mathrm{tol}}$;
we suppress this as in Theorem \ref{thm:qkd}. The optimal $t$ has a closed
form, and we optimise over $k$ and $\nu$ by a grid search, as described below.

\subsection{Integer parameters and leakage}\label{sec:reconciliation-model}

For exact tails, at each candidate $k$ the code inverts the binomial survival function at the assigned test-abort probability $\eps_{\mathrm{com}}/2$ to obtain $a$. It checks that the tail at $a$ is at most $\eps_{\mathrm{com}}/2$ and that the tail at $a-1$ exceeds $\eps_{\mathrm{com}}/2$, adjusting the integer quantile when necessary. For analytic tails, an integer bisection instead finds the smallest $a+1$ satisfying equation~\eqref{eq:chernoff-completeness} with budget $\eps_{\mathrm{com}}/2$; the comparison is evaluated in the logarithmic domain. With $n=m-k$, the search uses $\nu=j/n$ for positive integers $j$ and computes
\begin{equation}
 \Qtol=\frac{a+1}{k},\qquad
 b=\left\lceil\frac{n(a+1)}{k}\right\rceil+j.
\end{equation}
Thresholds in $(a/k,(a+1)/k]$ give the same accepted test counts; we fix the upper endpoint rather than optimise within this interval. At $Q=0\%$, the choice is $a=0$ and $\Qtol=1/k$. The test rejects equality at $k\Qtol$, so the largest accepted count is $a=\lceil k\Qtol\rceil-1$, rather than $\lfloor k\Qtol\rfloor$. The strict constraint $\Qtol+\nu<1/2$ is enforced as $2n(a+1)+2kj<kn$. These integer counts are retained for the tail evaluations instead of being reconstructed from floating-point rates.

The integer leakage used in the calculation is
\begin{equation}
 \leakEC=\left\lceil\xi_1nh(Q)
 -\xi_2\sqrt{nV(Q)}\,\Phi^{-1}(\eps_{\mathrm{com}}/2)\right\rceil.
 \label{eq:integer-leakage}
\end{equation}
The coefficients are chosen as specified in Section~\ref{sec:methods-numerics}. At $Q=0\%$, we assume zero leakage. The leakage depends on $n$ and the physical $Q$, not on $\Qtol$.

\subsection{Objectives and verification length}

For the key rate and key length curves, the new LHL calculation uses the security test
\begin{equation}
 2^{-t}+\eps_\pe(\nu)+
 \sqrt{\frac{2^{\ell-H_*(\nu,t)}}{3}}
 \leq\eps_{\mathrm{tot}}.
 \label{eq:numerical-new-security}
\end{equation}
For positive remaining budget $\eps_{\mathrm{tot}}-2^{-t}-\eps_\pe(\nu)$, the objective is
\begin{equation}
 \ell_{\mathrm{new}}=
 \max\left\{0,\left\lfloor H_*(\nu,t)+\log_2 3+
 2\log_2\!\left[\eps_{\mathrm{tot}}-2^{-t}-\eps_\pe(\nu)\right]\right\rfloor\right\}.
 \label{eq:numerical-new-length2}
\end{equation}
The expression is evaluated directly without exponentiating the entropy. For the old LHL, the security test is
\begin{equation}
 2^{-t}+2\sqrt{\eps_\pe(\nu)}+
 \frac12\sqrt{2^{\ell-H_*(\nu,t)}}\leq\eps_{\mathrm{tot}},
\end{equation}
giving
\begin{equation}
 \ell_{\mathrm{old}}=\max\left\{0,
 \left\lfloor H_*(\nu,t)+
 2\log_2\!\left[\eps_{\mathrm{tot}}-2^{-t}-2\sqrt{\eps_\pe(\nu)}\right]+2\right\rfloor\right\}.
\end{equation}
Candidates with nonpositive remaining budget are discarded; a search with no positive-key candidate reports zero. The analytic-tail curves use the same objectives, with equation~\eqref{eq:serfling} for parameter estimation and equation~\eqref{eq:chernoff-completeness} to set $\Qtol$.

For fixed $k,\Qtol,\nu,\leakEC$, let $A=\eps_{\mathrm{tot}}-\eps_\pe(\nu)$ for the new LHL and $A=\eps_{\mathrm{tot}}-2\sqrt{\eps_\pe(\nu)}$ for the old LHL. Since $H_*(\nu,t)=H_*(\nu,0)-t$, both objectives before rounding are monotone in $x(A-x)^2$, where $x=2^{-t}$. For $A>0$, this expression is maximised at $x=A/3$, giving, for the new LHL,
\begin{equation}
 t_{\mathrm{cont}}=\log_2\frac{3}{\eps_{\mathrm{tot}}-\eps_\pe(\nu)}.
 \label{eq:tag-optimum}
\end{equation}
For the old LHL, replace $\eps_\pe(\nu)$ in equation~\eqref{eq:tag-optimum} by $2\sqrt{\eps_\pe(\nu)}$. The code compares the positive integer neighbours of $t_{\mathrm{cont}}$ and keeps the one giving the larger objective before key-length rounding; ties select the shorter tag. This optimises $t$ without adding a search dimension and is also used for one-bit feasibility.

\subsection{Grid refinement for key lengths}

For each block size $m$, we maximise the key length numerically over the
test-sample size $k$ and the parameter $\nu$. We start from a coarse grid over
the admissible range, with $\nu$ restricted to multiples of $1/n$, where $n$ is
the number of key bits remaining after sampling. At each grid point we set $t$
to its analytic optimum and compute the resulting key length. We then refine
the grid around the best candidate and repeat until the grid spacing reaches
integer resolution or a fixed number of passes is exhausted, never discarding a
better candidate found earlier. Since every admissible choice of parameters
yields a valid key length, any suboptimality of the search only makes the
reported rates conservative. The REAT curves are computed with the code
of~\cite{gohtan2026}.

Figures~\ref{fig:rates} and~\ref{fig:sampling-comparison} use the same 147 integer block sizes: the union of 81 logarithmically spaced values from $10^3$ to $2\times10^4$, 41 from $5000$ to $10^5$, 25 from $10^5$ to $10^7$, and $10^4$, after rounding and deduplication. Figure~\ref{fig:satellite}(a) uses $m\in\{1000,2000,3000,4000,5000\}$ and $Q=0\%,0.025\%,\ldots,2\%$, with both grid resolutions at each point. Table~\ref{tab:rates} reads selected rates from the same datasets as Fig.~\ref{fig:rates}. The REAT data are pre-computed using the code of~\cite{gohtan2026} and not part of this optimisation.

\subsection{Minimum block size for one-bit extraction}

For Fig.~\ref{fig:satellite}(b), we apply Theorem~\ref{thm:pa} directly at $\ell=1$, retaining $2^\ell-1=1$ and the full coefficient~\eqref{eq:kappa}:
\begin{equation}
 2^{-t}+\eps_\pe(\nu)+\frac12\sqrt{\kappa\!\left(\sqrt{\eps_\pe(\nu)}\right)}\,2^{-H_*(\nu,t)/2}
 \leq\eps_{\mathrm{tot}}.
 \label{eq:one-bit}
\end{equation}
Here $\kappa(\sqrt{\eps_\pe})=(1-\eps_\pe)(1+3\eps_\pe)$ throughout the feasible range. The search covers all integer $2\leq k<m$, positive integer $j$ and corresponding $\nu=j/(m-k)$ satisfying the strict half-error constraint. It optimises $t$ as above. Starting at $m=256$, doubling finds a feasible upper block size. A recursive search then examines the entire integer interval from $3$ to that upper size, always visiting the lower half first. There are no admissible test sizes for $m<3$. The algorithm excludes intervals using the following necessary conditions, rather than assuming that integer-rounded feasibility is monotone in $m$.

Write $H_0=H_*+t$ for the entropy before verification. If $t_0$ is the optimal tag at zero sampling error, any feasible candidate must satisfy
\begin{equation}
 H_0\geq H_{0,\min}\coloneqq t_0-\log_2\!\left[4(\eps_{\mathrm{tot}}-2^{-t_0})^2\right],
 \qquad \eps_\pe<\eps_{\mathrm{tot}}.
\end{equation}
For a block-size interval $[L,U]$ and a fixed $k<U$, put $n_-=\max\{1,L-k\}$ and $n_+=U-k$. The count $a$ and $\Qtol$ depend only on $k$ and $Q$. Let $r(n)$ denote the leakage~\eqref{eq:integer-leakage}. Since $r(n)$ and $n[1-h(\Qtol+j/n)]$ are nondecreasing in $n$ in the admissible range,
\begin{equation}
 H_0\leq n_+[1-h(\Qtol+j/n_+)]-r(n_-).
 \label{eq:threshold-entropy-upper}
\end{equation}
The right-hand side decreases with $j$. Integer bisection finds the largest $j_{\max}$ for which it can reach $H_{0,\min}$, subject to the half-error constraint at $n_+$. If no positive $j$ survives, this $k$ is excluded.

For a surviving $k$, every candidate in the interval has sampling probability at least
\begin{equation}
 \Pr\!\left[\operatorname{Hyper}\!\left(
 k+n_-,\min\{k+n_-,\,a+\lceil n_+\Qtol\rceil+j_{\max}\},k\right)\leq a\right].
 \label{eq:threshold-tail-lower}
\end{equation}
At fixed total errors this cumulative probability increases with population size, and at fixed population size it decreases with total errors. Taking these two optimistic endpoints is therefore valid even when they cannot be attained together. The code excludes $k$ when the lower bound exceeds $\eps_{\mathrm{tot}}$. Comparisons allow an absolute entropy margin of $10^{-10}$ and a relative probability margin of $10^{-10}$ to avoid pruning close candidates through rounding.

An interval is excluded if no $k$ remains; otherwise it is bisected until a single block size is reached. At that size the code evaluates every surviving $k$ and every $1\leq j\leq j_{\max}$ with the full test~\eqref{eq:one-bit}. The first feasible size is $M_0$ on the stated lattice, and the exclusion intervals account for every smaller size. This procedure is run at the 81 QBERs above for each security target in Fig.~\ref{fig:satellite}(b), giving 243 witnesses. 

\end{document}